\documentclass[11pt]{article}

\usepackage[T1]{fontenc}
\usepackage[margin=1in]{geometry}
\usepackage{mathtools,amssymb,amsthm}
\usepackage{microtype}
\usepackage{graphicx,booktabs,enumitem}
\usepackage{xcolor}
\usepackage{tikz}
\usetikzlibrary{arrows.meta}
\usepackage[hidelinks]{hyperref}
\usepackage[capitalize,nameinlink,noabbrev]{cleveref}

\theoremstyle{plain}
\newtheorem{theorem}{Theorem}[section]
\newtheorem{lemma}[theorem]{Lemma}
\newtheorem{corollary}[theorem]{Corollary}

\newtheorem{claim}[theorem]{Claim}
\newtheorem{fact}[theorem]{Fact}

\theoremstyle{definition}
\newtheorem{definition}[theorem]{Definition}

\theoremstyle{remark}

\numberwithin{equation}{section}

\newcommand{\F}{\mathbb{F}}
\newcommand{\E}{\mathbb{E}}
\newcommand{\bits}{\{0,1\}}
\newcommand{\ind}{\mathbf{1}}

\newcommand{\NP}{\mathsf{NP}}
\newcommand{\BPP}{\mathsf{BPP}}
\newcommand{\RP}{\mathsf{RP}}
\newcommand{\PRP}{\mathsf{P}^{\RP}}
\newcommand{\PNP}{\mathsf{P}^{\NP}}

\DeclareMathOperator{\poly}{poly}
\DeclareMathOperator{\polylog}{polylog}
\newcommand{\GH}{\mathsf{GapHam}}
\newcommand{\indicate}{\mathsf{Ind}}

\newcommand{\cfunc}{\mathsf{Check}}
\newcommand{\adrs}{\mathsf{Ads}}
\newcommand{\andgadget}{\mathsf{AND}}
\newcommand{\tGH}{\widetilde{\mathsf{GapHam}}}

\newcommand{\PNPcc}{\mathsf{P}^{\NP cc}}
\newcommand{\Rcc}{\mathsf{R}}
\newcommand{\rect}{\mathsf{rect}}
\newcommand{\PRPcc}{\mathsf{P}^{\RP cc}}

\newcommand{\ba}{\mathbf{a}}
\newcommand{\bb}{\mathbf{b}}
\newcommand{\bc}{\mathbf{c}}

\newcommand{\bu}{\mathbf{u}}
\newcommand{\bv}{\mathbf{v}}

\newcommand{\bx}{\mathbf{x}}
\newcommand{\by}{\mathbf{y}}
\newcommand{\bz}{\mathbf{z}}

\newcommand{\bZ}{\mathbf{Z}}

\title{Efficient Randomized Communication Without Large Monochromatic Rectangles}
\author{%
  Haoyu Wang\\
  \small Penn State University\\
  \small\href{mailto:hjw5492@psu.edu}{\texttt{hjw5492@psu.edu}}
  \and
  Pei Wu\\
  \small Penn State University\\
  \small\href{mailto:pei.wu@psu.edu}{\texttt{pei.wu@psu.edu}}%
}
\date{}

\begin{document}
\maketitle

\begin{abstract}
In this paper, we construct a total Boolean function with  $\widetilde{O}(\log n)$ randomized communication protocol, while any monochromatic rectangle has density at most  $O(2^{-\mathrm{poly}(n)})$. As a corollary, it gives the first total function separation for $\BPP\not\subseteq\PNP$ in the communication world. Inspired by Gavinsky’s recent work (arXiv:2608.18784), our construction combines the cheat-sheet framework with fully linear PCPs.
\end{abstract}

\section{Introduction}
\label{sec:intro}

In the two-party communication model of Yao~\cite{yao1979complexity},
Alice and Bob receive inputs
$x,y\in\bits^n$, respectively, and jointly compute a Boolean function
$F(x,y)$. The goal is to communicate as few bits as possible. The randomized class
$\BPP$ in communication complexity consists of families of functions computable using public
randomness and at most $\polylog n$ bits of communication, with error
at most $1/3$ on every input.

Understanding randomness relative to nondeterminism and the
polynomial hierarchy is a central question in complexity theory.
In communication complexity, $\BPP$ is incomparable with
both $\NP$ and $\mathsf{coNP}$~\cite{babai1986complexity}, whereas the corresponding
computational comparisons remain open. At the second level of the polynomial hierarchy, the
Sipser--G\'acs--Lautemann argument carries over to communication, as
recorded by Babai, Frankl, and
Simon~\cite[Proposition~7.8]{babai1986complexity}:
\[
 \BPP\subseteq\Sigma_2\cap\Pi_2.
\]
This containment is strict, since $\NP\subseteq\Sigma_2\cap\Pi_2$ and
$\NP\not\subseteq\BPP$.
This picture motivates a finer comparison within $\Sigma_2\cap\Pi_2$, between $\BPP$ and
$\PNP$, the class of efficient deterministic protocols with adaptive
access to $\NP$ predicates. These $\NP$ queries return exact answers,
which can determine the choice of later queries. The question of
whether $\BPP\subseteq\PNP$ holds for total functions was explicitly
raised by G\"o\"os, Pitassi, and
Watson~\cite{goos2018landscape}. Hatami and
Hatami~\cite{hatami2026structure} conjecture that this
containment fails.

A second motivation is to relate randomized communication cost to
exact combinatorial structure in Boolean matrices. Let $\Rcc(F)$
denote the minimum cost of a public-coin randomized protocol that
computes $F$ with error at most $1/3$ on every input. View $F$ as a
matrix whose rows and columns are indexed by Alice's and Bob's inputs.
A monochromatic rectangle is a product set $A\times B$, with
$A,B\subseteq\bits^n$, on which $F$ is constant. Write
\[
 \rect(F)
 =
 \max\left\{
 \frac{|A||B|}{2^{2n}}:
 F\text{ is constant on }A\times B
 \right\}.
\]
Does every Boolean function with an efficient randomized protocol
contain a large monochromatic rectangle?
The log-rank conjecture provides a familiar example of such a
structural question. Nisan and Wigderson~\cite{nisan1995rank} showed
that this conjecture is equivalent to the universal assertion
$\rect(F)\ge 2^{-(\polylog(\operatorname{rank}(F)+2))}$, where $\operatorname{rank}(F)$ is the
 real rank of the communication matrix. Following their
separation of $\BPP$ from deterministic protocols with adaptive
Equality queries, Chattopadhyay, Lovett, and
Vinyals~\cite[Problem~6.1]{chattopadhyay2019equality} asked the
analogous question for randomized protocols: does
$\rect(F)\ge 2^{-O(\Rcc(F))}$ always hold?
Hatami and Hatami~\cite{hatami2026structure} asked this question again in their recent survey.

A structural property of $\PNP$ connects these questions.
Papakonstantinou, Scheder, and
Song~\cite{papakonstantinou2014overlays} characterized $\PNP$ through
\emph{rectangle decision lists}, which they called \emph{rectangle
overlays}. A list consists of labeled rectangles covering the input
domain, and each input receives the label of the first rectangle
containing it. The logarithm of the minimum list length is polynomially
related to the cost of an optimal $\PNP$ protocol. The \emph{product
method} of Impagliazzo and Williams~\cite{impagliazzo2010communication}
provides a lower-bound criterion for this model: every protocol of
cost $c$ has a monochromatic rectangle of measure at least $2^{-O(c)}$
under every product distribution on the two inputs. Taking the
uniform distribution shows that every function in $\PNP$ satisfies
$\rect(F)\ge 2^{-\polylog n}$.

We construct a \emph{total} Boolean function with small randomized
communication complexity and no large monochromatic rectangle,
resolving both questions.

\subsection{Our results}

\begin{theorem}\label{thm:main}
There is a family of total Boolean functions
$F_n:\bits^n\times\bits^n\to\bits$ satisfying
\[
 \Rcc(F_n)=O(\log n\log\log n)
 \qquad\text{and}\qquad
 \rect(F_n)
 \le 2^{-n^{\Omega(1)}}.
\]
\end{theorem}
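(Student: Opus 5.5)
We build $F_n$ by lifting a partial function with an efficient randomized protocol and no large monochromatic rectangles to a total function via the cheat-sheet framework. The verification of the cheat sheet is done through a fully linear PCP so that it stays cheap.

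\textbf{Base partial function.} Start from a gap-Hamming-type promise problem $\GH$ on $m$-bit strings: Alice holds $x$, Bob holds $y$, and we must decide whether $|x\oplus y|$ is small (at most $\epsilon m$) or large (at least $m/2$ or so). On the promise this has an $O(\log(1/\epsilon))$-cost randomized protocol by sampling coordinates. Any rectangle of density $2^{-o(m)}$ must contain both yes- and no-instances; this is the standard discrepancy and corruption estimate for gap Hamming distance under the uniform distribution. Concretely, a random pair $(x,y)$ is a no-instance, and the yes-instances are dense enough inside any large rectangle, by an isoperimetric or hypercontractivity argument, to corrupt it.

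\textbf{Totalization via cheat sheets.} Take $k=O(\log n)$ independent copies of $\GH$ and let the input also contain an array of cheat-sheet cells indexed by the vector of $k$ answers. $F$ outputs $1$ iff the indexed cell holds a valid certificate that every copy satisfies the promise with the claimed answers. Totality is automatic: when the promise fails or the certificate is invalid, the value is $0$. A certificate of ``$|x\oplus y|$ is small'' is not efficiently checkable deterministically, so here the fully linear PCP enters. The cell stores, split between Alice and Bob, a proof string $\pi=\pi_A\oplus\pi_B$ for the statement ``$x\oplus y$ has low weight'' or ``high weight'', for example via a fully linear PCP for a weight-threshold predicate over $\F$. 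The verifier queries only $O(1)$ linear combinations of $(x\oplus y,\pi)$. Each linear query is computed with $O(\log n)$ bits by Alice and Bob exchanging their local inner products, using public randomness for the query.

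\textbf{Protocol cost.} Solve each copy of $\GH$ with error $1/\polylog n$ at cost $O(\log\log n)$, so all $k$ copies cost $O(\log n\log\log n)$. Read the address, then run the fully linear PCP verifier on the indexed cell with $O(\log n)$ communication. This gives $\Rcc(F_n)=O(\log n\log\log n)$.

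\textbf{Rectangle bound.} This is the main obstacle. For $1$-rectangles, soundness of the PCP forces every input in the rectangle to satisfy the promise with the stored answers. Fixing the cell contents by averaging leaves a rectangle of density still $2^{-n^{o(1)}}$ in $\GH$-space that is monochromatic for one copy. This contradicts corruption once $m=n^{\Omega(1)}$. For $0$-rectangles, the uniform distribution is mostly $0$, so we must argue differently. I would first try to show that within any rectangle of density $2^{-n^{c}}$, one can plant a valid cheat sheet while staying inside the rectangle. The fully linear structure should help: validity is a linear condition on $\pi_A\oplus\pi_B$, so it can be achieved by choosing Bob's share as a function only of Alice's share and the answers. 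Anti-concentration of the remaining free coordinates then yields a $1$-input. Making this planting compatible with arbitrary product sets, rather than subcubes, is the step I expect to be delicate. I would handle it by a density-increment or sunflower-style restriction to a large subcube on which the cell coordinates are free.
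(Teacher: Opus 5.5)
Your construction and protocol are essentially the paper's: $k$ linear-gap Gap-Hamming copies addressing $m=2^k$ XOR-shared cells, a fully linear PCP with $O(1)$ linear queries over a field of size $\poly(n)$, and $O(\log k)$ samples per copy. The gap is the rectangle bound, and above all the $0$-rectangle case. That case is the real content of the theorem, and you leave it as a plan whose ingredients do not work:
\begin{itemize}
\item The fully linear structure gives linear \emph{queries}, not a linear validity condition. The paper's $\cfunc$ demands $p\equiv f_{(\bz,p)}g_{(\bz,p)}$, which is quadratic in $p$.
\item Bob's share cannot be ``chosen''; it must come from $B'$.
\item The valid certificate depends on all of $\bz=\bx\oplus\by$, not only on the answers.
\item A restriction to a subcube freeing a whole cell is unavailable in general. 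The tables whose cells' first bits XOR to $0$ form a set of density $1/2$ that contains no subcube on which an entire cell is free.
\end{itemize}

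What is missing is a counting-plus-pigeonhole argument:
\begin{itemize}
\item Let $\mathcal U_{\bx}=\{\bu:(\bx,\bu)\in A'\}$ have density $\alpha_{\bx}$. The inequality $|\mathcal U_{\bx}|\le\prod_h|\pi_h(\mathcal U_{\bx})|$ shows that at most $\log(1/\alpha_{\bx})$ addresses have a projection $\pi_h(\mathcal U_{\bx})$ covering at most half of $\bits^{4k(2d-1)}$.
\item Keep the $\bx$ with $\alpha_{\bx}\ge\alpha/2$ (a set of density at least $\alpha/2$), and likewise for $\mathcal V_{\by}$. Average over addresses. Since $m=2^k$ exceeds $2\log\frac{1}{\alpha\beta}+4$, you get one address $h$ and sets $X_h,Y_h$ of densities at least $\alpha/4,\beta/4$ on which both projections at $h$ exceed half.
\item You then need a pair in $X_h\times Y_h$ whose \emph{whole} $k$-bit answer vector equals $h$, with all promises met. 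Your single-copy ``contains yes- and no-instances'' statement does not give this. Lemma~\ref{lem:distance} applied to $X_h$ and $Y_h\oplus(h_1\mathbf 1_m,\ldots,h_k\mathbf 1_m)$ does.
\item Finally, take any valid $\bc^*$ for this pair. The sets $\pi_h(\mathcal U_{\bx})$ and $\bc^*\oplus\pi_h(\mathcal V_{\by})$ each contain more than half of the cell space, so they intersect. This yields a $1$-input inside $A'\times B'$.
\end{itemize}
This argument is also where the quantitative bound $2^{-m/(8k)}$ comes from.

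Your $1$-rectangle argument also fails as written. Fixing the tables does not fix the address, so the base rectangle you obtain is not monochromatic for any copy. Containing both yes- and no-instances of one copy is no contradiction when different pairs select different cells. With your asymmetric thresholds, a random pair satisfies the promise with constant probability per copy, so the rarity of promise inputs does not help either. The paper instead uses the uniqueness part of Lemma~\ref{lem:check_func}. For each $(\bx,\by)$ the output is $1$ only when $\bu_h\oplus\bv_h$ equals the unique valid certificate, so $\Pr_\mu[\tGH=1]\le2^{-4k(2d-1)}$, which bounds every $1$-rectangle. Uniqueness, or at least sparsity, of valid proofs is a property of the specific circuit-based certificate, not of fully linear PCPs in general. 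You must build it into your certificate predicate for this step to go through.
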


The family is obtained from the cheat-sheet function framework on the Gap-Hamming problem
$\tGH$ constructed in \cref{sec:construction}.

The totality requirement is essential to the class separation.
Gap-Hamming with a gap linear in its input length already separates randomized
communication from adaptive $\NP$ queries on a promise domain:
random sampling gives a constant-cost protocol, whereas the lower
bounds of Papakonstantinou, Scheder, and
Song~\cite{papakonstantinou2014overlays} apply to every total extension.
The difficulty is to define the function outside its promise while
retaining an efficient randomized protocol.

Write $\PNPcc(F)$ for the minimum cost of a deterministic protocol
with adaptive $\NP$ queries, charged their nondeterministic
communication cost. Precise cost conventions appear in
\cref{sec:preliminaries}.
The product method of Impagliazzo and
Williams~\cite{impagliazzo2010communication} gives
\[
 \PNPcc(F)
 =
 \Omega\!\left(\log\frac{1}{\rect(F)}\right).
\]
Applying this bound under the uniform distribution to
\cref{thm:main} yields the following separation.

\begin{corollary}\label{cor:bpp-pnp}
The family $\{F_n\}$ in \cref{thm:main} satisfies
\[
 \Rcc(F_n)=O(\log n\log\log n)
 \qquad\text{and}\qquad
 \PNPcc(F_n)\ge n^{\Omega(1)}.
\]
In particular, $\BPP\not\subseteq\PNP$ for total functions.
\end{corollary}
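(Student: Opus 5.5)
The corollary follows from \cref{thm:main} and the product-method bound quoted just above it. The upper bound $\Rcc(F_n)=O(\log n\log\log n)$ is exactly the first half of \cref{thm:main}, so it needs no further argument. For the lower bound I will invoke the Impagliazzo--Williams product method in its rectangle-decision-list form and apply it to the uniform product distribution on $\bits^n\times\bits^n$.

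The argument I have in mind runs as follows. Suppose $F_n$ has a $\PNP$ protocol of cost $c$. By the characterization of Papakonstantinou, Scheder, and Song, $F_n$ then has a rectangle decision list of length $L\le 2^{\poly(c)}$. The product method gives the sharper statement that a monochromatic rectangle of measure $2^{-O(c)}$ exists, but the list form already suffices here, as follows. Give each input the first rectangle of the list that contains it. The rectangles $R_1,\dots,R_L$ cover the domain, so some $R_i$ carries uniform mass at least $1/L$ among the inputs that first land in it. That set of inputs is $R_i\setminus\bigcup_{j<i}R_j$, which is not itself a rectangle. To fix this, apply the standard peeling argument of Impagliazzo--Williams. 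Start from the full domain as a product set $A\times B$. At each step, if the current first rectangle $R_j=A_j\times B_j$ has small relative measure inside $A\times B$, restrict to the larger of $A\setminus A_j$ or $B\setminus B_j$ side by side, so that $R_j$ is removed while at most a constant fraction of the product measure is lost. Otherwise, restrict to $R_j$ itself. Once we restrict into some $R_j$, all earlier rectangles have been avoided, so the current rectangle is monochromatic with label $\ell_j$. Tracking the measure gives a monochromatic rectangle of uniform density at least $2^{-O(c)}$, or $2^{-\poly(c)}$ if we pass through the list characterization. Either version suffices.

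Combining this with $\rect(F_n)\le 2^{-n^{\Omega(1)}}$ from \cref{thm:main} gives $2^{-\poly(c)}\le 2^{-n^{\Omega(1)}}$, hence $c\ge n^{\Omega(1)}$. Since $\poly(n^{\Omega(1)})$ is still $n^{\Omega(1)}$, the polynomial loss is harmless. We conclude $\PNPcc(F_n)\ge n^{\Omega(1)}$. Since $F_n$ is total and lies in $\BPP$, where $\Rcc(F_n)=O(\log n\log\log n)=\polylog n$, but has superpolylogarithmic $\PNP$ cost, we obtain $\BPP\not\subseteq\PNP$ for total functions.

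The main obstacle is the measure bookkeeping in the peeling step. Each removal must lose only a constant factor, and the number of removals before hitting a dense $R_j$ must be bounded by $O(c)$, or by $\log L$. I would handle this with the threshold argument: call $R_j$ dense when its measure inside the current $A\times B$ is at least $1/(2L)$. Removing a rectangle of relative measure below that threshold by halving one side can be charged so that the total loss is a factor $2^{O(\log L)}$. Since the result is already quoted in the paper as $\PNPcc(F)=\Omega(\log(1/\rect(F)))$, I would ultimately just cite it and present this paragraph only as a sanity check.
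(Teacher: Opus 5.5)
Your argument is correct and is exactly the paper's: the upper bound is read off from \cref{thm:main}, and the lower bound comes from applying the product method (\cref{fact:product-method}) under the uniform measure to $\rect(F_n)\le 2^{-n^{\Omega(1)}}$, which gives $\PNPcc(F_n)\ge n^{\Omega(1)}$. If you keep the peeling paragraph as a sanity check, its bookkeeping needs repair. The number of removals is not bounded by $\log L$ (it can be $L-1$), so each removal may cost only a factor $1-1/(2L)$, not a constant. A product-measure threshold of $1/(2L)$ does not ensure this, since it only puts one side of a sparse $R_j$ below $1/\sqrt{2L}$, and $L$ such removals compound to $e^{-\Omega(\sqrt{L})}$. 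Instead, call $R_j$ dense when both of its sides have relative measure at least $1/(2L)$ inside the current $A\times B$; this yields a monochromatic rectangle of density at least $1/(8L^2)$.
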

Since $\NP\not\subseteq\BPP$ was already known
\cite{babai1986complexity}, Corollary~\ref{cor:bpp-pnp} establishes that $\BPP$
and $\PNP$ are incomparable within $\Sigma_2\cap\Pi_2$, even for
total functions.

\medskip
The second corollary concerns $\RP$, the one-sided-error version of $\BPP$: protocols
never err on $0$-inputs and have error at most $1/3$ on $1$-inputs.
We have the standard containment
\[
 \PRP\subseteq\BPP\cap\PNP.
\]
Indeed, error amplification allows efficient adaptive $\RP$-query
protocols to be simulated in $\BPP$, while $\RP\subseteq\NP$ gives
the containment in $\PNP$~\cite{hatami2026structure}.
Write $\PRPcc(F)$ for the analogous cost with exact $\RP$ queries,
charged their one-sided randomized communication cost.
Since an $\RP$ query of cost $h\ge1$ has nondeterministic cost at most
$h+O(\log n)$~\cite{hatami2026structure}, simulating each query gives
$\PNPcc(F)\le O(\log n)\,\PRPcc(F)$.
Corollary~\ref{cor:bpp-pnp} then gives the following
quantitative consequence.

\begin{corollary}\label{cor:bpp-prp}
The family $\{F_n\}$ in \cref{thm:main} satisfies
\[
 \Rcc(F_n)=O(\log n\log\log n)
 \qquad\text{and}\qquad
 \PRPcc(F_n)\ge n^{\Omega(1)}.
\]
In particular, $\PRP\subsetneq\BPP$ for total functions.
\end{corollary}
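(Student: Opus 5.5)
The statement is Corollary~\ref{cor:bpp-prp}. It follows directly from Corollary~\ref{cor:bpp-pnp} together with the simulation of $\RP$ queries by $\NP$ queries recorded just above. The upper bound $\Rcc(F_n)=O(\log n\log\log n)$ is part of \cref{thm:main} and needs no further work, so only the lower bound on $\PRPcc(F_n)$ requires an argument.

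For the lower bound, take an optimal $\PRP$ protocol for $F_n$ of cost $c=\PRPcc(F_n)$. By the cost convention, it makes at most $c$ adaptive $\RP$ queries, each of one-sided randomized cost $h_i\ge 1$, with $\sum_i h_i\le c$, plus at most $c$ bits of deterministic communication. Replace each query by an $\NP$ query computing the same predicate. This is possible because an $\RP$ protocol never accepts a $0$-input, so a guessed random string on which the protocol accepts is a valid nondeterministic certificate. By the cited bound of~\cite{hatami2026structure}, such a query has nondeterministic cost at most $h_i+O(\log n)$. The $O(\log n)$ term comes from Newman-type reduction of public randomness to $O(\log n)$ bits, which the certificate can then specify.

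The simulated protocol therefore has cost at most
\[
c+\sum_i\bigl(h_i+O(\log n)\bigr)\le O(\log n)\,c .
\]
This gives $\PNPcc(F_n)\le O(\log n)\,\PRPcc(F_n)$. Combining with $\PNPcc(F_n)\ge n^{\Omega(1)}$ from Corollary~\ref{cor:bpp-pnp} yields
\[
\PRPcc(F_n)\ge n^{\Omega(1)}/O(\log n)=n^{\Omega(1)},
\]
since the logarithmic factor is absorbed into the exponent.

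For the class statement, $\PRP\subseteq\BPP$ holds by error amplification: each $\RP$ query is repeated $O(\log c)$ times so that a union bound over the at most $c$ queries keeps the total error below $1/3$, at polylogarithmic overhead. The family $\{F_n\}$ then has polylogarithmic $\Rcc$ but superpolylogarithmic $\PRPcc$, so the inclusion $\PRP\subsetneq\BPP$ is strict for total functions.

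The only step with any content is the per-query overhead in the simulation. The point to check is that the additive $O(\log n)$ per query, multiplied by at most $c$ queries, still gives only an $O(\log n)$ multiplicative loss. It does, because every query has cost $h_i\ge 1$, so the number of queries is at most $c$. No part of this is a real obstacle.
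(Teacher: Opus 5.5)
Your proposal is correct and follows the paper's route. You simulate each $\RP$ query of cost $h$ by an $\NP$ query of cost $h+O(\log n)$; your Newman-style justification is sound, since restricting to a fixed set of $O(n)$ random strings preserves one-sided error. Using $h\ge 1$ to bound the number of queries then gives $\PNPcc(F_n)\le O(\log n)\,\PRPcc(F_n)$, after which you invoke Corollary~\ref{cor:bpp-pnp}, and strictness follows from $\PRP\subseteq\BPP$ by error amplification. One cosmetic point: in this model ordinary communication bits are already queries of cost $1$, so your separate additive term for deterministic bits is redundant, though harmless.
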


Corollary~\ref{cor:bpp-prp} shows that even adaptive composition of
efficient one-sided-error randomized protocols cannot simulate every
efficient two-sided-error randomized protocol, answering a question
of Pitassi, Shirley, and Watson~\cite{pitassi2020nondeterministic}.
Figure~\ref{fig:class-relations} summarizes these relations.

\begin{figure}[t]
 \centering
  \resizebox{0.4\linewidth}{!}{%
  \begingroup
\fontencoding{OT1}\selectfont
\definecolor{ClassResult}{RGB}{166,48,50}
\begin{tikzpicture}[
  class/.style={font=\normalsize,inner sep=4pt},
  inclusion/.style={-{Stealth[length=2mm]},draw=black!60,line width=0.7pt},
  new strictness/.style={inclusion,draw=ClassResult,line width=1pt},
  relation/.style={font=\small,inner sep=2pt}
]
  \node[class] (upper) at (0,1.6) {$\Sigma_2\cap\Pi_2$};
  \node[class] (bpp) at (-3.2,0) {$\BPP$};
  \node[class] (pnp) at (3.2,0) {$\PNP$};
  \node[class] (prp) at (0,-1.6) {$\PRP$};

  \draw[inclusion] (bpp) -- (upper);
  \draw[inclusion] (pnp) -- (upper);
  \draw[inclusion] (prp) -- (pnp);
  \draw[new strictness] (prp) --
    node[midway,sloped,below=3pt,font=\footnotesize,text=ClassResult]
      {strict (this work)} (bpp);

  \draw[densely dotted,draw=black!55,line width=0.7pt]
    (bpp.east) -- (pnp.west);

  \node[relation,text=ClassResult] at (0,0.28)
    {$\BPP\not\subseteq\PNP$\quad (this work)};
  \node[relation,text=black!65] at (0,-0.28)
    {$\PNP\not\subseteq\BPP$\quad (known)};
\end{tikzpicture}
\endgroup}
 \caption{\small Selected communication classes on total functions;
 new separations are highlighted in red.}
 \label{fig:class-relations}
\end{figure}

\subsection{Proof overview}
Our construction uses the cheat-sheet method of
Aaronson, Ben-David, and Kothari~\cite{aaronson2016separations},
introduced in query complexity to obtain separations for total
functions from promise problems. Anshu
et al.~\cite{anshu2016separations} developed the method in communication
complexity.

We use Gap-Hamming with a linear gap as the outer gadget: its outputs
address a table of certificates, the cheat-sheet. The inner gadget uses fully linear
PCPs\footnotemark~to certify that the base input satisfies the Gap-Hamming domain promises.
Gavinsky~\cite{gavinsky2026quantum} recently used this inner gadget to
separate quantum and randomized communication for a total function.

\footnotetext{We remark here, the fully linear PCP of Boneh et al. is designed for secret-shared inputs. Its verifier accesses the input and proof jointly through a constant number of linear functionals; Boneh et al. call such a system a fully linear PCP. It has no implication to the proof length.}

At a high level, the randomized upper bound is from random sampling for
Gap-Hamming and the verification guarantee of fully linear PCPs.
The remaining task is to show that the resulting total function
has no large monochromatic rectangle.

Fix an integer $k$, let $m=2^k$, and put $d=O(km^2)$.
Alice and Bob receive inputs
\[
 (\bx,\bu),\ (\by,\bv)\in
 \bits^{k\cdot m}\times\bits^{m\cdot 4k(2d-1)},
\]
respectively. The strings $\bx$ and $\by$ are the inputs to $k$
Gap-Hamming instances, each on $m$ bits per party. The strings
$\bu$ and $\bv$ are the cheat-sheet inputs: each is a table of $m$
entries, with $4k(2d-1)$ bits per entry. Thus each party's input has length $n=km+4km(2d-1)$, so
$k=\Theta(\log n)$ and $m=n^{\Theta(1)}$. This construction is illustrated in~\cref{fig:cheat-sheet}.

\paragraph{Gap-Hamming supplies an address.}
Write $\bx=(\bx_1,\ldots,\bx_k)$ and
$\by=(\by_1,\ldots,\by_k)$, where
$\bx_i,\by_i\in\bits^m$, and put $\bz=\bx\oplus\by$.
For each pair, $\GH(\bx_i,\by_i)$ is $0$ when $|\bz_i|\le m/3$,
is $1$ when $|\bz_i|\ge 2m/3$, and is undefined otherwise.
The thresholds are separated by $m/3$.
Equivalently, each instance applies gap majority to the XOR block
$\bz_i$.
The indicator $\indicate(\bz)$ records whether all $k$ pairs satisfy
these promises. When $\indicate(\bz)=1$, their $k$ answers specify
an address $\adrs(\bz)\in[m]$, using the encoding in
\cref{sec:construction}.

The linear gap makes this address inexpensive to estimate. To make
all $k$ address bits correct with constant probability, we reduce the
error on each instance to $O(1/k)$. This requires $O(\log k)$ sampled
coordinates per instance and hence $O(k\log k)$ communication.

\paragraph{Fully linear PCPs supply the certificates.}
We use the fully linear PCP construction of
Boneh et al.~\cite{boneh2019zero} to certify that all Gap-Hamming
instances satisfy their promises. For each $\bz$, a string
$\bc\in\bits^{4k(2d-1)}$ is a proposed certificate for
$\indicate(\bz)=1$. We define the certificate predicate $\cfunc(\bz,\bc)$
so that an accepting certificate exists if and only if $\indicate(\bz)=1$.

A fully linear PCP is a
probabilistically checkable proof whose verifier accesses both the
input and the proof through linear functionals over a finite field.
The construction we use requires only $O(1)$ such queries.

This query model is particularly useful when the input and proof are
shared between Alice and Bob. Suppose
$\bc=\bc_A\oplus\bc_B$. Over $\F_{2^{4k}}$, the XOR shares become
additive shares, so for any linear functional $L$,
\[
 L(\bz,\bc)
 =L(\bx,\bc_A)+L(\by,\bc_B).
\]
The public randomness specifies all the queries. For each one, Alice
evaluates the functional on her shares, and Bob can recover the answer
by adding his local value. Since the queries are nonadaptive, Alice
sends all $O(1)$ field elements in one message.

\paragraph{The combined cheat sheet.}
The cheat-sheet table has one cell for each of the $m=2^k$ possible
vectors of Gap-Hamming outputs. Alice holds
$\bu=(\bu_1,\ldots,\bu_m)$ and Bob holds
$\bv=(\bv_1,\ldots,\bv_m)$. At entry $j$, their two shares specify
the proposed certificate
\[
 \bc_j=\bu_j\oplus\bv_j.
\]
When the base input satisfies every promise, its $k$ Gap-Hamming
outputs select one cell, and the function checks only the certificate
in that cell. Writing $h=\adrs(\bz)$, we define
\[
 \tGH(\bx,\bu;\by,\bv)
 =
 \begin{cases}
 \cfunc(\bz,\bc_h),
     & \text{if }\indicate(\bz)=1,\\
 0,  & \text{if }\indicate(\bz)=0.
 \end{cases}
\]
This definition covers every input. If some Gap-Hamming instance
violates its promise, the output is $0$ independently of the table.
If all instances satisfy their promises, every cell except the
$\ell$-th is ignored, and the output is $1$ exactly when
$\bc_\ell$ is the unique valid certificate. The table is unrestricted,
so an invalid certificate at the selected cell produces output $0$.
Figure~\ref{fig:cheat-sheet} depicts the combined construction.

\begin{figure}[t]
 \centering
   \resizebox{0.8\linewidth}{!}{%
    \begingroup
\definecolor{CheatFrame}{RGB}{0,0,0}
\definecolor{CheatCell}{RGB}{153,106,29}
\begin{tikzpicture}[
  font=\small,text=black,
  flow/.style={-{Stealth[length=2.2mm,width=1.65mm]},
    draw=black,line width=0.85pt,rounded corners=3pt},
  gadget/.style={draw=CheatFrame,fill=white,
    line width=0.8pt,rounded corners=2pt,
    minimum width=2.8cm,minimum height=0.68cm,inner sep=5pt},
  cell/.style={draw=CheatFrame,line width=0.6pt,
    minimum width=1.65cm,minimum height=0.72cm,inner sep=2pt}
]
  \node at (2.8,7.85)
    {Outer gadget: Gap-Hamming};
  \node[draw=CheatFrame,line width=0.85pt,rounded corners=2.5pt,
    fill=white,align=center,minimum width=3.35cm,
    minimum height=0.92cm,inner sep=6pt] (base) at (2.8,6.40)
    {$\bx=(\bx_1,\ldots,\bx_k)$\\[2pt]
     $\by=(\by_1,\ldots,\by_k)$};

  \node[gadget] (gh1) at (2.8,4.95) {$\GH(\bx_1,\by_1)$};
  \node[gadget] (gh2) at (2.8,3.35) {$\GH(\bx_2,\by_2)$};
  \node at (2.8,2.68) {$\vdots$};
  \node[gadget] (ghk) at (2.8,1.85) {$\GH(\bx_k,\by_k)$};

  \draw[draw=black,line width=0.85pt,rounded corners=3pt]
    (base.south) -- (2.8,5.60) -- (0.45,5.60) -- (0.45,1.85);
  \draw[flow] (0.45,4.95) -- (gh1.west);
  \draw[flow] (0.45,3.35) -- (gh2.west);
  \draw[flow] (0.45,1.85) -- (ghk.west);

  \node[align=center,inner sep=6pt] (address) at (5.95,3.35)
    {$k$-bit address\\[2pt]$h=\adrs(\bz)$};
  \draw[flow] (gh1.east) -- (address.north west);
  \draw[flow] (gh2.east) -- (address.west);
  \draw[flow] (ghk.east) -- (address.south west);

  \node at (10.65,7.85) {Cheat-sheet table};
  \node[font=\footnotesize] at (10.65,7.43) {$m=2^k$ cells};

  \foreach \pos/\idx in {5.51/1,4.79/2,1.91/m} {
    \node[cell] at (9.825,\pos) {$\bu_{\idx}$};
    \node[cell] at (11.475,\pos) {$\bv_{\idx}$};
  }
  \foreach \pos in {4.07,2.63} {
    \node[cell] at (9.825,\pos) {$\vdots$};
    \node[cell] at (11.475,\pos) {$\vdots$};
  }

  \draw[draw=CheatCell,fill=CheatCell!10,line width=1pt]
    (9.0,2.99) rectangle (12.3,3.71);
  \draw[draw=CheatCell,line width=0.8pt]
    (10.65,2.99) -- (10.65,3.71);
  \node at (9.825,3.35) {$\bu_h$};
  \node at (11.475,3.35) {$\bv_h$};

  \draw[flow] (address.east) --
    node[above=3pt,font=\footnotesize] {}
    (9.0,3.35);

  \node[draw=CheatFrame,fill=white,line width=0.85pt,
    rounded corners=2.5pt,align=center,minimum width=5.55cm,
    minimum height=1.55cm,inner sep=6pt] (check) at (6.66,-0.90)
    { Inner gadget: $\cfunc(\bz,\bc_h)$\\[3pt]
     Fully linear PCPs:
     Boneh et al.~\cite{boneh2019zero}};

  \draw[flow] (12.3,3.35) -- (13.50,3.35) --
    (13.50,-0.90) --
    node[above=4pt,midway] {$\bc_h=\bu_h\oplus\bv_h$}
    (check.east);

  \draw[flow] (base.west) -- (-0.18,6.40) -- (-0.18,-0.90) --
    node[above=4pt,midway] {$\bz=\bx\oplus\by$}
    (check.west);

  \node[inner sep=3pt] (output) at (6.66,-2.80)
    {$\tGH(\bx,\bu;\by,\bv)\in\bits$};
  \draw[flow] (check.south) -- (output.north);
\end{tikzpicture}
\endgroup
  }
 \caption{\small The Gap-Hamming cheat-sheet construction on promised base
 inputs. The $k$ Gap-Hamming outputs select one of $m=2^k$ entries in
 the XOR-shared certificate table. The selected certificate is checked
 against the base input using fully linear PCPs of
 Boneh et al.~\cite{boneh2019zero}; the inner predicate tests whether
 this certificate proves $\indicate(\bz)=1$. }
 \label{fig:cheat-sheet}
\end{figure}

\paragraph{Excluding large monochromatic rectangles.}
Ruling out a large $0$-rectangle is delicate and involves several
steps. Given a somewhat large rectangle, we find a $1$-input inside it as follows.
First, we retain base inputs on each side that have many compatible
tables; the resulting sets of base inputs remain large. Second, a
counting argument shows that each retained base input has
only a few addresses at which at most half of all table-entry values
occur. Third, double counting gives an address $h$ for which more than
half of all values occur for many retained inputs on both sides. We
restrict to these inputs, obtaining two large sets $X_h$ and $Y_h$.
Fourth, the Hamming-distance argument finds $\bx\in X_h$ and
$\by\in Y_h$ whose Gap-Hamming outputs select address $h$. Finally, let
$\bc^*$ be the valid certificate for this pair. Because $\bx\in X_h$,
more than half of all certificate shares occur as the $h$-th entry of
a table compatible with $\bx$ on Alice's side; the analogous statement
holds for $\by\in Y_h$ on Bob's side. After translating Bob's set of
shares by $\bc^*$, these two sets still contain more than half of all
shares and therefore intersect. This gives compatible tables whose
$h$-th entries have XOR $\bc^*$, producing a $1$-input in the proposed
$0$-rectangle. Large $1$-rectangles are ruled out directly by the
density of $1$-inputs.

\section{Preliminaries}
\label{sec:preliminaries}

\paragraph{Notation.}
For a positive integer $n$, write $[n]=\{1,\ldots,n\}$.
All logarithms are to base $2$, except that $\ln$ denotes the natural
logarithm. We write $\polylog n$ for $(\log n)^{O(1)}$.
For a binary string $\bz$, write $\bz(j)$ for its $j$-th
coordinate and $|\bz|$ for its Hamming weight. When a string is
partitioned into blocks, subscripts index the blocks, so $\bz_i(j)$
is the $j$-th coordinate of block $\bz_i$. The operation $\oplus$
denotes bitwise XOR, so $|\bx\oplus\by|$ is the Hamming distance. For an event $E$, its indicator
$\ind\{E\}$ is $1$ when $E$ holds and $0$ otherwise.
The density of a subset $S$ of a finite nonempty set $\Omega$ is
$|S|/|\Omega|$.

\paragraph{Rectangles.}
Let $F:\mathcal X\times\mathcal Y\to\bits$ be a total Boolean
function on finite nonempty input sets. A rectangle is a set
$A\times B$ with $A\subseteq\mathcal X$ and $B\subseteq\mathcal Y$.
It is a monochromatic $b$-rectangle if $F(x,y)=b$ for every $(x,y)\in A\times B$, $b\in \{0,1\}$.
We write
\[
 \rect(F)=\max\left\{
 \frac{|A||B|}{|\mathcal X||\mathcal Y|}:
 A\times B\text{ is monochromatic for }F
 \right\}.
\]
Thus $\rect(F)$ is the largest monochromatic rectangle density
under the uniform distribution.
For a product probability measure $\mu=\mu_{\mathcal X}\times
\mu_{\mathcal Y}$, we have  $\mu(A\times B): =  \mu_{\mathcal X}(A)\mu_{\mathcal Y}(B)$.

\paragraph{Communication models.}
Alice receives $x\in\mathcal X$ and Bob receives $y\in\mathcal Y$. Write $\Rcc(F)$ for the minimum
communication cost of a public-coin randomized protocol computing
$F$ with error at most $1/3$ on every input. Its cost is the worst-case communication cost over all inputs and random bits.

For $\PNPcc(F)$, we use the model of
G\"o\"os et al.~\cite{goos2017query}.
A protocol is a deterministic binary decision tree whose leaves
are labeled by output bits. At each internal node, the parties
receive the exact value of a predicate $Q(x,y)$. A query has cost
$h\ge1$ if its $1$-set can be covered by at most $2^h$ rectangles.
The answer determines the next node, so queries may depend on
earlier answers. The cost of the protocol is the maximum, over all
inputs, of the sum of query costs along the resulting path.
We write $\PNPcc(F)$ for the minimum cost of a protocol computing
$F$ exactly. Ordinary communication is included: sending one bit
is a query of cost $1$. Figure~\ref{fig:np-query-protocol} illustrates
the rectangle covers and adaptive queries in this model.

Following Hatami and Hatami~\cite{hatami2026structure},
define $\PRPcc(F)$ in the same way, with each query charged its
public-coin one-sided randomized communication complexity.
Such a randomized protocol must output $0$ whenever $Q(x,y)=0$
and must output $1$ with probability at least $2/3$ whenever
$Q(x,y)=1$. We impose a minimum query cost of $1$, as in the
$\NP$-query model. The oracle still returns $Q(x,y)$ exactly;
randomization enters only in the cost assigned to the query.

For families $\{F_n:\bits^n\times\bits^n\to\bits\}$, the classes
$\BPP$, $\PNP$, and $\PRP$ consist of those with respective costs
$\Rcc(F_n)$, $\PNPcc(F_n)$, and $\PRPcc(F_n)$ bounded by
$(\log n)^{O(1)}$. All complexity classes in this paper refer to
communication complexity.

\begin{figure}[tb]
 \centering
 \resizebox{0.5\linewidth}{!}{%
 \begingroup
\definecolor{QueryRed}{RGB}{173,59,55}
\begin{tikzpicture}[
  font=\small,
  cover/.style={draw=QueryRed,line width=0.75pt,
    fill=QueryRed,fill opacity=0.14},
  frame/.style={draw=black!75,line width=0.85pt},
  branch/.style={-{Stealth[length=2.2mm,width=1.7mm]},
    draw=black!70,line width=0.85pt},
  answer/.style={inner sep=1pt},
  cost/.style={font=\small\bfseries\boldmath,inner sep=2pt},
  output/.style={circle,draw=black!65,line width=0.75pt,
    fill=white,minimum size=5.8mm,inner sep=1pt}
]
  \node at (0,2.00) {$\mathcal Y$};
  \node at (-1.92,0.85) {$\mathcal X$};
  \draw[frame] (-1.6,0) rectangle (1.6,1.7);
  \draw[cover] (-1.34,0.75) rectangle (-0.08,1.48);
  \draw[cover] (-0.48,0.22) rectangle (1.30,1.05);
  \node[cost,anchor=west] at (1.85,0.85) {cost $=1$};

  \draw[branch] (-0.75,-0.07) -- (-3.20,-1.52);
  \draw[branch] (0.75,-0.07) -- (3.20,-1.52);
  \node[answer] at (-2.17,-0.49) {$0$};
  \node[answer] at (2.17,-0.49) {$1$};

  \begin{scope}[shift={(-3.20,-3.50)}]
    \draw[frame] (-1.6,0) rectangle (1.6,1.90);
    \draw[cover] (-1.38,0.87) rectangle (-0.28,1.65);
    \draw[cover] (-0.70,0.43) rectangle (0.46,1.12);
    \draw[cover] (0.18,0.92) rectangle (1.27,1.61);
    \draw[cover] (0.92,0.17) rectangle (1.40,0.58);
  \end{scope}
  \node[cost,anchor=west] at (-1.35,-2.55) {cost $=2$};

  \begin{scope}[shift={(3.20,-3.50)}]
    \draw[frame] (-1.6,0) rectangle (1.6,1.90);
    \draw[cover] (-1.30,0.82) rectangle (-0.71,1.60);
    \draw[cover] (-0.46,0.28) rectangle (1.27,1.16);
  \end{scope}
  \node[cost,anchor=west] at (5.05,-2.55) {cost $=1$};

  \node[output] (out00) at (-4.15,-4.95) {$0$};
  \node[output] (out01) at (-2.25,-4.95) {$1$};
  \node[output] (out10) at (2.25,-4.95) {$1$};
  \node[output] (out11) at (4.15,-4.95) {$0$};

  \draw[branch] (-3.75,-3.58) -- (out00);
  \draw[branch] (-2.65,-3.58) -- (out01);
  \draw[branch] (2.65,-3.58) -- (out10);
  \draw[branch] (3.75,-3.58) -- (out11);

  \node[answer] at (-4.25,-4.12) {$0$};
  \node[answer] at (-2.15,-4.12) {$1$};
  \node[answer] at (2.15,-4.12) {$0$};
  \node[answer] at (4.25,-4.12) {$1$};

  \node[cost] at (0,-5.75)
    {Protocol cost: $\max\{1+2,\,1+1\}=3$};
\end{tikzpicture}
\endgroup}
 \caption{\small An adaptive $\PNP$ protocol in the model of
 G\"o\"os et al.~\cite{goos2017query}.
 Each box depicts a query on $\mathcal X\times\mathcal Y$:
 the shaded rectangles cover its $1$-set, and the white region is
 its $0$-set. A cover by at most $2^h$ rectangles permits a
 query of cost $h\ge1$. The exact answer selects the next query;
 leaves give the final output.}
 \label{fig:np-query-protocol}
\end{figure}

\paragraph{The product method.}
We use the following lower bound of Impagliazzo and
Williams~\cite{impagliazzo2010communication}; see also
\cite{goos2017query}.
\begin{fact}[Impagliazzo-Williams]\label{fact:product-method}
Let $\mu$ be a product probability measure on
$\mathcal X\times\mathcal Y$. If every monochromatic rectangle
of $F$ has $\mu$-measure at most $\delta$, where $0<\delta\le1$,
then $\PNPcc(F)=\Omega(\log(1/\delta))$. In particular,
\[
 \PNPcc(F)=\Omega\!\left(\log\frac1{\rect(F)}\right).
\]
\end{fact}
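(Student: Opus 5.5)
We argue the contrapositive: a $\PNPcc$ protocol of cost $c$ for $F$ yields a monochromatic rectangle of $\mu$-measure at least $2^{-O(c)}$. Then $\delta\ge 2^{-O(c)}$, i.e.\ $c=\Omega(\log(1/\delta))$. The second assertion follows by taking $\mu$ to be the uniform distribution on $\mathcal X\times\mathcal Y$, which is a product measure, and $\delta=\rect(F)$.

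The plan is to walk down the decision tree while maintaining a current rectangle $R=A\times B$, starting from $R=\mathcal X\times\mathcal Y$. We keep the invariant that every input in $R$ follows the path taken so far. Consider a node whose query $Q$ has cost $h$, with $1$-set covered by rectangles $R_1,\ldots,R_{t}$, where $t\le 2^h$ and $R_i=A_i\times B_i$. Write $\mu_{\mathcal X}(\cdot\mid A)$ and $\mu_{\mathcal Y}(\cdot\mid B)$ for the conditional measures. Set $p_i=\mu_{\mathcal X}(A_i\cap A\mid A)$ and $q_i=\mu_{\mathcal Y}(B_i\cap B\mid B)$, so that $\mu(R_i\cap R)=p_iq_i\,\mu(R)$ because $\mu$ is a product measure. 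There are two cases, with threshold $\varepsilon=2^{-2h-2}$.
\begin{itemize}
\item \emph{Some $i$ has $p_iq_i\ge\varepsilon$.} Replace $R$ by $R_i\cap R$. On this subrectangle $Q\equiv1$, so all its inputs follow the $1$-branch. The measure drops by a factor of at least $2^{-2h-2}$.
\item \emph{Every $i$ has $p_iq_i<\varepsilon$.} Then $\min(p_i,q_i)<\sqrt\varepsilon=2^{-h-1}$ for each $i$. Assign each $i$ to the side realizing the minimum. Remove from $A$ the union of the $A_i$ assigned to Alice's side, and from $B$ the union of the $B_i$ assigned to Bob's side. The resulting rectangle $A'\times B'$ avoids every $R_i$, so $Q\equiv0$ on it and all its inputs follow the $0$-branch.
\end{itemize}

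In the second case, the key point is to bound the removed mass using the \emph{original} conditional densities $p_i,q_i$, not densities relative to the shrinking sets. By a union bound, $\mu_{\mathcal X}(A\setminus A'\mid A)\le\sum_{i\to A}p_i<2^h\cdot 2^{-h-1}=1/2$, and similarly on Bob's side. Hence $\mu(A'\times B')\ge\mu(R)/4$. This avoids any sequential-update issue.

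In either case, a query of cost $h$ shrinks the measure by at most a factor of $2^{-(2h+2)}$. Along the path, the query costs sum to at most $c$, and each query has cost at least $1$, so there are at most $c$ queries. The final rectangle therefore has measure at least $2^{-\sum(2h+2)}\ge 2^{-4c}$. By the invariant it lies entirely within the inputs reaching a single leaf, so it is monochromatic. Hence $\delta\ge 2^{-4c}$, as required.

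The main obstacle is the $0$-answer case. The $0$-set of a query is the complement of a union of rectangles, which is not itself a rectangle, and it need not contain a large rectangle unless we use the cover structure together with the product form of $\mu$. The dichotomy at threshold $2^{-2h-2}$, followed by assigning each rectangle to its sparser side, is exactly what handles this case.
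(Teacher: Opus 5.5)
Your argument is correct. The paper gives no proof of this fact and only cites Impagliazzo--Williams and G\"o\"os et al.; your walk down the protocol tree is essentially the standard argument from those sources, and it yields the explicit bound $\PNPcc(F)\ge\frac14\log(1/\delta)$. At each query you either follow a covering rectangle that is dense in the current rectangle, or delete the thin side of every covering rectangle, with the union bound taken against the original conditional densities.
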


\section{Totalization via cheat sheets}
\label{sec:construction}

We extend the Gap-Hamming communication problem to a total function by adding tables of certificates, following the cheat-sheet construction~\cite{anshu2016separations,aaronson2016separations}.
Throughout the construction, fix an integer $k\ge 12$ and let $m=2^k$.

\begin{definition}
For $\ba,\bb\in\bits^m$, the Gap-Hamming function is
\[
\GH(\ba,\bb)=
\begin{cases}
0, & |\ba\oplus\bb|\le m/3,\\
1, & |\ba\oplus\bb|\ge 2m/3,\\
*, & m/3<|\ba\oplus\bb|<2m/3.
\end{cases}
\]
The symbol $*$ means that the output is undefined.
\end{definition}

Define
\[
\bx=(\bx_1,\ldots,\bx_k),\qquad
\by=(\by_1,\ldots,\by_k),\qquad
\bz=(\bz_1,\bz_2,...,\bz_k )  = \bx\oplus\by,
\]
where $\bx_i,\by_i\in\bits^m$ and $\bz_i=\bx_i\oplus\by_i$.
For each   $h\in[m]$, let $(h_1,h_2,\ldots,h_k)\in\bits^{k}$ be the  binary expansion, i.e.,
\[
h=1+\sum_{i=1}^k h_i2^{k-i}.
\]
Thus the outputs of $k$ Gap-Hamming instances specify an address in the table.
\begin{definition}
For $\bz \in\bits^{km}$, define the indicator function as
\[
\indicate(\bz)=
\prod_{i=1}^k
\ind\big\{|\bz_i|\le m/3\text{ or }|\bz_i|\ge 2m/3\big\}.
\]
If $\GH$ is well defined on  all $(\bx_i,\by_i)$, i.e., $\indicate(\bz)=1$,
we obtain an address
\[
\adrs(\bz) :=
\displaystyle 1+\sum_{i=1}^k
\GH(\bx_i,\by_i)\cdot 2^{k-i} \in [m].
\]
Otherwise, define $\adrs(\bz)$ arbitrarily in $[m]$.
\end{definition}

\subsection{A circuit for \texorpdfstring{$1-\indicate(\bz)$}{1-Ind(z)}.}

In the cheat sheet totalization construction, the fully linear PCP from Boneh et al.~\cite{boneh2019zero} serves to checks whether all Gap-Hamming inputs satisfy their promises. The PCP builds on a circuit with certificate; the circuit computes $1-\indicate(\bz)$; and the certificate claims the circuit will output $0$, which is equivalent to $\indicate(\bz)=1$. Alice and Bob can efficiently check this claim by exchanging  $O(k)$ bits. We first describe the circuit.

\begin{lemma}\label{lem:circuit}
There is a circuit using fan-in two  $\mathsf{XOR}$
and $\andgadget$ gates, $\mathsf{NOT}$ gates, and constants $\{0,1\}$, that computes
$1-\indicate(\bz)$ for every $\bz\in\bits^{km}$.
Furthermore, it includes
$
d=k\big(m(m+3)+1\big)
$
$\andgadget$ gates.
\end{lemma}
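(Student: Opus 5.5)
The plan is to build the circuit block by block. For each block I will compute the bit $b_i=1-\ind\{|\bz_i|\le m/3\text{ or }|\bz_i|\ge 2m/3\}$, then combine these bits into $1-\indicate(\bz)=\mathsf{OR}_{i=1}^k b_i$. The AND gates will be counted along the way. Any shortfall against the target $d=k(m(m+3)+1)$ is made up with dummy $\andgadget$ gates, for example $\andgadget(0,0)$, whose outputs are never used. So it suffices to exhibit a circuit with at most $d$ $\andgadget$ gates, and the budget of $m(m+3)+1$ per block leaves room for a somewhat wasteful construction.

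For a fixed block $\bz_i\in\bits^m$, I will compute the weight in unary using a prefix-count dynamic program. For $0\le j\le m$ and $0\le t\le m$, let $s_{j,t}$ be the indicator that the first $j$ coordinates of $\bz_i$ have weight exactly $t$. The initial values are constants: $s_{0,0}=1$ and $s_{0,t}=0$ for $t\ge1$. The recurrence is
\[
 s_{j,t}=s_{j-1,t}\oplus\Big(\bz_i(j)\wedge\big(s_{j-1,t}\oplus s_{j-1,t-1}\big)\Big),
\]
with $s_{j-1,-1}:=0$. This is correct: if $\bz_i(j)=0$ it returns $s_{j-1,t}$, and if $\bz_i(j)=1$ it returns $s_{j-1,t-1}$. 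Each update uses one $\andgadget$ gate, so all $j\in[m]$, $t\in\{0,\dots,m\}$ together use $m(m+1)$ $\andgadget$ gates per block.

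The events $\{|\bz_i|=t\}$ are mutually exclusive, so the bad-block bit is a plain XOR:
\[
 b_i=\bigoplus_{m/3<t<2m/3}s_{m,t}.
\]
This needs no $\andgadget$ gates. Alternatively, $b_i$ can be written as $1\oplus\bigoplus_{t\le m/3\text{ or }t\ge 2m/3}s_{m,t}$ using a $\mathsf{NOT}$ gate; either form gives the same count.

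To combine the blocks, I will use the identity $a\vee b=a\oplus b\oplus(a\wedge b)$ to build the OR of $b_1,\dots,b_k$ as a chain. This costs $k-1$ $\andgadget$ gates. The total is then $km(m+1)+k-1\le k(m(m+3)+1)=d$, and padding with $2km+1$ dummy $\andgadget$ gates brings the count to exactly $d$.

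The main point needing care is the correctness of the one-AND recurrence, specifically that XOR can replace OR because of mutual exclusivity. This holds both at the final disjunction over $t$ and inside the update, since $s_{j-1,t}$ and $s_{j-1,t-1}$ are never simultaneously $1$. The rest is bookkeeping. If the paper's exact count $m(m+3)+1$ per block reflects a specific layout, for instance separate threshold comparators for $\le m/3$ and $\ge 2m/3$ or extra gates for the $\mathsf{NOT}$-based encoding, it can be matched by the padding remark. Only an upper bound on the number of $\andgadget$ gates matters for the later PCP application.
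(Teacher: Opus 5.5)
Your argument proves the lemma as stated, and its outline matches the paper's: a prefix-count dynamic program in each block, an XOR over the mutually exclusive indicators $s_{m,t}$, and a chain across the $k$ blocks. The differences are at the gate level.
\begin{itemize}
\item Your update $s_{j-1,t}\oplus\bz_i(j)\wedge(s_{j-1,t}\oplus s_{j-1,t-1})$ is the paper's $(\lnot\bz_i(j)\land b_{j-1,r})\oplus(\bz_i(j)\land b_{j-1,r-1})$ with the common term factored out, so it costs one $\andgadget$ instead of two. It is exact for arbitrary bits; mutual exclusivity is needed only for the XOR over $t$, not inside the update.
\item You take the OR of bad-block bits, where the paper negates the AND of good-block bits.
\item You reach $d$ by padding. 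In the paper, $d=2k\sum_{j=1}^m(j+1)+k$ is simply the gate count of its circuit.
\end{itemize}
Your layout is more economical and shows that the count $d$ is loose.

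What your layout loses is a property the paper builds in on purpose. The circuit ends with $\lnot(B_1\land\cdots\land B_k)\land 1$, so the last $\andgadget$ in calculation order, gate $d$, carries the output $1-\indicate(\bz)$. The test $p(\tau_d)=0$ in $\cfunc$ relies on this, and so does the appendix proof of \cref{lem:check_func} (``the final gate outputs $1-\indicate(\bz)$'').

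In your circuit the output is an $\mathsf{XOR}$ gate. So gate $d$ is either a dummy, which always outputs $0$, or the inner $\andgadget$ of the last OR step. In either case there are $\bz$ with $\indicate(\bz)=0$ for which the honest certificate still satisfies $p(\tau_d)=0$. That breaks part 2) of \cref{lem:check_func} and the $\indicate(\bz)=0$ case of the randomized protocol.

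Your closing remark that only an upper bound on the number of $\andgadget$ gates matters downstream is therefore wrong. The fix is cheap: place the dummies before the OR chain and spend one spare gate on a final $\big(\bigvee_i b_i\big)\land 1$. That gives $km(m+1)+k$ genuine gates plus $2km$ dummies, exactly $d$, with gate $d$ as the output.
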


For each $i\in[k]$ and $0\le j\le m$, define $b^{(i)}_{j,r}=0$ whenever $r<0$ or $r>j$, and set $b^{(i)}_{0,0}=1$.
For $1\le j\le m$ and $0\le r\le j$, given the input $\bz$, compute
\begin{equation}\label{eq:1}
b_{j,r}^{(i)}
=\big(\lnot\bz_i(j)\land b_{j-1,r}^{(i)}\big)
 \oplus\big(\bz_i(j)\land b_{j-1,r-1}^{(i)}\big).
\end{equation}
The following identity shows that these bits record the number of ones in each prefix segment of $\bz_i$.

\begin{claim}\label{fact:counting}
For each $i\in[k]$, $0\le j\le m$, and $0\le r\le j$,
\[
b_{j,r}^{(i)}
=\ind\Big\{\sum_{t=1}^j\bz_i(t)=r\Big\}.
\]
\end{claim}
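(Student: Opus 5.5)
We argue by induction on $j$, for a fixed block $i\in[k]$; the claim concerns one block at a time. Write $s_j=\sum_{t=1}^j\bz_i(t)$ for the prefix weight. We will show that for every $j$ with $0\le j\le m$ and every integer $r$, $b^{(i)}_{j,r}=\ind\{s_j=r\}$. Here $b^{(i)}_{j,r}$ is read as $0$ when $r<0$ or $r>j$. This statement covers all integers $r$, which is slightly stronger than what the claim asks for. We prove the stronger form because the recurrence \eqref{eq:1} at level $j$ uses the indices $r-1$ and $r$ at level $j-1$, and these can fall outside the range $0\le r\le j-1$.

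\emph{Base case and boundary terms.} For $j=0$ we have $s_0=0$, and by convention $b^{(i)}_{0,0}=1$ while $b^{(i)}_{0,r}=0$ for $r\neq 0$. These agree with $\ind\{s_0=r\}$. For any $j$, if $r<0$ or $r>j$ then $s_j\ne r$, because $0\le s_j\le j$. So the convention value $0$ also agrees with the indicator at every out-of-range index.

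\emph{Inductive step.} Fix $1\le j\le m$ and $0\le r\le j$, and assume the statement holds at level $j-1$ for all integers $r$. We split into two cases according to the bit $\bz_i(j)$.
\begin{itemize}
\item If $\bz_i(j)=0$, the second term of \eqref{eq:1} vanishes and the first equals $b^{(i)}_{j-1,r}$. By the induction hypothesis this is $\ind\{s_{j-1}=r\}$. Since $s_j=s_{j-1}$ in this case, it equals $\ind\{s_j=r\}$.
\item If $\bz_i(j)=1$, the first term vanishes and the second equals $b^{(i)}_{j-1,r-1}=\ind\{s_{j-1}=r-1\}$. Since $s_j=s_{j-1}+1$, this is $\ind\{s_j=r\}$.
\end{itemize}
In each case exactly one term of \eqref{eq:1} can be nonzero, so the XOR acts as an ordinary sum and introduces no cancellation.

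The only point that needs care is the boundary indices $r=0$ and $r=j$. At these indices the recurrence refers to $b^{(i)}_{j-1,-1}$ or $b^{(i)}_{j-1,j}$, which are defined by the zero convention rather than by \eqref{eq:1}. The base-case paragraph shows that these convention values coincide with the true indicators. The induction therefore goes through uniformly for all $r$, and the claim follows.
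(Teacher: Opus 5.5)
Your proposal is correct and follows the paper's proof essentially verbatim: induction on $j$, with a case split on $\bz_i(j)$ that reduces the recurrence to $b^{(i)}_{j-1,r}$ or $b^{(i)}_{j-1,r-1}$. Your explicit strengthening to all integers $r$ via the zero convention is the same extension the paper invokes when it notes that the identity ``also holds for $r<0$ or $r>j-1$.''
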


\begin{proof}
We prove the identity by induction on $j$.
For $j=0$, the sum is empty and $b^{(i)}_{0,0}=1$.
Suppose the identity holds for $j-1$.
By the zero values fixed above, it also holds for $r<0$ or $r>j-1$.

If $\bz_i(j)=0$, then \eqref{eq:1} gives
\[
\begin{aligned}
b_{j,r}^{(i)}
=b_{j-1,r}^{(i)}
=\ind\Big\{\sum_{t=1}^{j-1}\bz_i(t)=r\Big\}
=\ind\Big\{\sum_{t=1}^{j}\bz_i(t)=r\Big\}.
\end{aligned}
\]
If $\bz_i(j)=1$, then \eqref{eq:1} gives
\[
\begin{aligned}
b_{j,r}^{(i)}
=b_{j-1,r-1}^{(i)}
=\ind\Big\{\sum_{t=1}^{j-1}\bz_i(t)=r-1\Big\}
=\ind\Big\{\sum_{t=1}^{j}\bz_i(t)=r\Big\}.
\end{aligned}
\]
These two cases prove the identity for $j$.
\end{proof}

We describe the circuit as follows.
\begin{proof}[Proof of Lemma~\ref{lem:circuit}]
    Given the $\{b_{j,r}^{(i) } \}$,
for each $i\in[k]$, the circuit computes
\[
B_i=
\bigoplus_{\substack{0\le r\le m\\r\le m/3\ \text{or}\ r\ge 2m/3}}
b_{m,r}^{(i)}.
\]
By Claim~\ref{fact:counting}, exactly one of $\{b_{m,0}^{(i)},\ldots,b_{m,m}^{(i)} \}$ equals $1$.
Therefore
$
B_i=\ind\big\{|\bz_i|\le m/3\text{ or }|\bz_i|\ge 2m/3\big\}.
$
Then the circuit outputs
\begin{equation}\label{eq:2}
\lnot(B_1\land\cdots\land B_k)\land 1
=1-\prod_{i=1}^k B_i
=1-\indicate(\bz).
\end{equation}
Specifically, compute $\lnot(B_1\land\cdots\land B_k)\land 1$ using $\andgadget$ gates with two inputs.

Each pair $(j,r)$ in \eqref{eq:1} uses two $\andgadget$ gates, and each $j$ has $j+1$ choices of $r$.
Equation~\eqref{eq:2} uses $k-1$ gates for $B_1\land\cdots\land B_k$ and one final gate. So the number of $\andgadget$ gates is
\[
2k\sum_{j=1}^{2^k}(j+1)+k
=k\big(2^k(2^k+3)+1\big) = O(km^{2}).\qedhere
\]

\end{proof}

In the next section, we introduce the certificate.

\subsection{Circuits to certificates.}

We arithmetize the circuit for use in the fully linear PCP.
In particular, we will express operations of the circuit over a finite field and encode
its $\andgadget$ outputs by a polynomial.

At first, label the $\andgadget$ gates according to the calculation order (bottom-up) by  $1,\ldots,d$. Recall that $d= k\big(m(m+3)+1\big)$ by Lemma~\ref{lem:circuit}. We perform arithmetic in the field $\F_{2^{4k}}$  which has characteristic 2. More specifically,
fix $e_1=1,e_2,\ldots,e_{4k}\in\F_{2^{4k}}$ such that every
$a\in\F_{2^{4k}}$ can be written uniquely as
\[
a=\sum_{i=1}^{4k}a_i e_i,\qquad a_i\in\{0,1\}.
\]
These elements form a basis over $\F_2=\{0,1\}$, and the
coefficient string $(a_1,\ldots,a_{4k})$ encodes $a$.
In particular, the field elements $0$ and $1$ are encoded by
$(0,\ldots,0)$ and $(1,0,\ldots,0)$. Addition in $\F_{2^{4k}}$ corresponds to bitwise XOR of these
$4k$-bit encodings: each coordinate is added modulo $2$,
while the result remains a $4k$-bit encoding of a field element.
We represent Boolean inputs $0,1$ by these field elements.
For Boolean $a,b$, the gate operations therefore satisfy
\[
a\land b=ab,\qquad a\oplus b=a+b,\qquad \lnot a=1+a,
\]
where the operations on the right are field operations. Since $d<2k\,4^k<2^{4k}$, we can fix distinct
$$\tau_1,\tau_2,\ldots,\tau_d\in\F_{2^{4k}}.$$

\begin{definition}[Certificate]
   A certificate $\bc\in\bits^{4k\cdot (2d-1)}$ consists of  $2d-1$ blocks of $4k$ bits.
Let $c_0,\ldots,c_{2d-2}\in\F_{2^{4k}}$ be the field elements encoded by these blocks, and define the polynomial in the variable $T$ by
\[
p(T)=\sum_{v=0}^{2d-2}c_vT^v.
\]
\end{definition}

Certificates and polynomials of degree at most $2d-2$ correspond one to one.
The certificate claims that the $v$-th $\andgadget$ gate  outputs $p(\tau_v)$;
and we show how to check this claim efficiently in the next section.

\subsection{The fully linear PCP}

Given $\bz$ and the polynomial $p$ encoded by certificate $\bc$,
process the circuit in calculation order.
At an $\mathsf{XOR}$ or $\mathsf{NOT}$ gate, apply the field operations above.
At the $v$-th $\andgadget$ gate, use $p(\tau_v)$ directly as its output.

Let $a_v(\bz,p)$ and $b_v(\bz,p)$ be the left and right inputs obtained at the $v$-th $\andgadget$ gate.
They depend only on $\bz$ and the earlier claimed outputs $p(\tau_1),\ldots,p(\tau_{v-1})$.
Each is a fixed weighted sum of the coordinates of $\bz$ and the coefficients of $p$, plus a constant: the only operations used between claimed outputs are addition and addition of $1$.
So $a_v(\bz,p)$ and $b_v(\bz,p)$ are
{affine} functions on $\bz$ and $p(\tau_1),\ldots,p(\tau_{d})$.

For $v\in[d]$, define
\[
\delta_v(T)=\prod_{\substack{j\in[d]\\j\ne v}}
\frac{T-\tau_j}{\tau_v-\tau_j}.
\]
These polynomials satisfy $\delta_v(\tau_v)=1$ and $\delta_v(\tau_j)=0$ for $j\ne v$.
Define
\[
f_{(\bz,p)}(T)=\sum_{v=1}^d\delta_v(T)a_v(\bz,p),\quad
g_{(\bz,p)}(T)=\sum_{v=1}^d\delta_v(T)b_v(\bz,p).
\]
Their degrees are at most $d-1$, and
\[
f_{(\bz,p)}(\tau_v)=a_v(\bz,p),\qquad
g_{(\bz,p)}(\tau_v)=b_v(\bz,p).
\]
For each fixed $T$, both evaluations are affine functions of $\bz$ and the coefficients of $p$.

We require a polynomial identity to enforce the multiplication equations, and require the final output to be $0$.
Define
\[
\cfunc(\bz,p)=
\begin{cases}
1, & p\equiv f_{(\bz,p)}g_{(\bz,p)}\text{ and }p(\tau_d)=0,\\
0, & \text{otherwise}.
\end{cases}
\]
Let $\cfunc(\bz,\bc) :=\cfunc(\bz,p)$ for the polynomial $p$ encoded by $\bc$.

The following lemma shows that a true  certificate exists exactly when $\indicate(\bz)=1$, and that the polynomial identity makes the certificate unique. For brevity, we defer the proof to  Appendix~\ref{app:auxiliary-proofs}.

\begin{lemma}\label{lem:check_func}
The function $\cfunc(\bz,p)$ determines whether $p$ certifies that the circuit output is $0$.
Moreover,
\begin{enumerate}
\item[1).] If $\indicate(\bz)=1$, exactly one certificate $\bc\in\bits^{4k(2d-1)}$ satisfies $\cfunc(\bz,\bc)=1$.
\item[2).] If $\indicate(\bz)=0$, every certificate $\bc\in\bits^{4k(2d-1)}$ satisfies $\cfunc(\bz,\bc)=0$.
\end{enumerate}
\end{lemma}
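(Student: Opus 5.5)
The plan is to show that $\cfunc(\bz,p)=1$ holds exactly when $p$ interpolates the true wire values of the circuit of Lemma~\ref{lem:circuit} on input $\bz$ and the circuit outputs $0$. Uniqueness then comes from a degree count.

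\emph{Step 1 (soundness: acceptance forces honest values).} Suppose $\cfunc(\bz,p)=1$, so $p\equiv f_{(\bz,p)}g_{(\bz,p)}$ and $p(\tau_d)=0$. Evaluating the identity at $\tau_v$ gives
\[
p(\tau_v)=a_v(\bz,p)\,b_v(\bz,p)\qquad\text{for every } v\in[d].
\]
By induction on $v$ in calculation order, I will show that $p(\tau_v)$ equals the true Boolean output $w_v(\bz)$ of the $v$-th $\andgadget$ gate, and that $a_v(\bz,p),b_v(\bz,p)$ equal its true Boolean inputs.
\begin{itemize}
\item The inputs $a_v,b_v$ depend only on $\bz$ and on $p(\tau_1),\dots,p(\tau_{v-1})$.
\item By the inductive hypothesis these earlier values are the true Boolean wire values.
\item $\mathsf{XOR}$, $\mathsf{NOT}$ and constants are simulated exactly by $+$ and $1+\cdot$ in $\F_{2^{4k}}$ on the encodings of $0,1$.
\end{itemize}
So $a_v,b_v$ are the true Boolean gate inputs, and their field product equals their Boolean $\land$. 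Hence $p(\tau_v)=w_v(\bz)$. The $d$-th gate is the final output gate $\lnot(B_1\land\cdots\land B_k)\land 1$, so $p(\tau_d)=1-\indicate(\bz)$. Acceptance therefore forces $\indicate(\bz)=1$, which proves item~2).

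\emph{Step 2 (completeness and uniqueness).} Assume $\indicate(\bz)=1$, and let $w_1,\dots,w_d$ be the true $\andgadget$ outputs.

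For \emph{uniqueness}, Step~1 shows that any accepting $p$ takes these values at the $\tau_v$. Then $a_v(\bz,p)$ and $b_v(\bz,p)$ are the fixed true input values $\alpha_v,\beta_v$, independent of which accepting $p$ we chose. Consequently $f_{(\bz,p)}$ and $g_{(\bz,p)}$ are the fixed Lagrange interpolants
\[
f^*=\sum_v\delta_v\alpha_v,\qquad g^*=\sum_v\delta_v\beta_v,
\]
and $p=f^*g^*$ is completely determined.

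For \emph{existence}, set $p:=f^*g^*$. Since $\deg f^*,\deg g^*\le d-1$, we get $\deg p\le 2d-2$, so $p$ corresponds to a certificate $\bc\in\bits^{4k(2d-1)}$. We have $p(\tau_v)=\alpha_v\beta_v=w_v$. Therefore the values fed into later gates are the true ones, which gives $f_{(\bz,p)}=f^*$ and $g_{(\bz,p)}=g^*$, and hence $p\equiv f_{(\bz,p)}g_{(\bz,p)}$. Finally, $p(\tau_d)=1-\indicate(\bz)=0$, so $p$ is accepted. Since certificates correspond bijectively to polynomials of degree at most $2d-2$, exactly one certificate is accepted, which proves item~1).

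\emph{Main obstacle.} The only delicate point is the circularity in Step~1: $f_{(\bz,p)}$ and $g_{(\bz,p)}$ themselves depend on $p$. This is resolved by the topological-order induction, which uses the fact that $a_v,b_v$ depend only on earlier claimed outputs. One must also check that the gate labeled $d$ really is the output gate. This holds because the final $\land 1$ is computed last in the construction of Lemma~\ref{lem:circuit}, so it is the last $\andgadget$ gate in calculation order.
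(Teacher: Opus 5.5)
Your proposal is correct and follows the paper's appendix proof in substance. The core is a calculation-order induction showing that the identity $p\equiv f_{(\bz,p)}g_{(\bz,p)}$ alone forces every claimed $\andgadget$ output $p(\tau_v)$ to equal the true one; this gives item~2 via $p(\tau_d)=1-\indicate(\bz)$ and uniqueness via $p\equiv f^*g^*$, and existence comes from taking the product of the Lagrange interpolants of the true gate inputs. The only difference is the order of presentation: you run this induction first and reuse it, whereas the paper proves existence first and then notes that its uniqueness induction never used $p(\tau_d)=0$.
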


\medskip

We show the randomized test for the polynomial  condition in $\cfunc$ as follows.
A linear query returns a weighted sum of the coordinates
of $\bz$ and the coefficients of $p$. Let $\mathbf{0}\in\bits^{km}$ be the zero vector,
and let $p_0(T)\equiv0$ be the zero polynomial.

\begin{theorem}
[Fully linear PCP, cf.~\cite{boneh2019zero}] \label{lem:linear-check}
Fix $\bz\in\bits^{km}$ and a polynomial
$p\in\F_{2^{4k}}[T]$ of degree at most $2d-2$.
There is a randomized test using four linear queries: if $\cfunc(\bz,p)=1$, it always outputs $1$; if $\cfunc(\bz,p)=0$,
it outputs $1$ with probability
at most $1/6$.
\end{theorem}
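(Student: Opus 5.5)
The test is the standard Schwartz--Zippel check of a polynomial identity, plus one direct check of the output gate. Sample a uniformly random $r\in\F_{2^{4k}}\setminus\{\tau_1,\ldots,\tau_d\}$ (or simply uniform $r$; the distinction only affects constants). Make four linear queries:
\[
q_1=p(r),\quad q_2=f_{(\bz,p)}(r),\quad q_3=g_{(\bz,p)}(r),\quad q_4=p(\tau_d).
\]
Accept iff $q_1=q_2q_3$ and $q_4=0$.

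Each query is a linear query in the sense above. $p(r)=\sum_v c_v r^v$ and $p(\tau_d)$ are linear in the coefficients of $p$. By the discussion preceding $\cfunc$, for fixed $T=r$ the values $f_{(\bz,p)}(r)$ and $g_{(\bz,p)}(r)$ are affine in $\bz$ and the coefficients of $p$: each $a_v,b_v$ is affine in $\bz$ and the values $p(\tau_j)$, and these values are linear in the coefficients. The constant term is a fixed field element, known from public randomness, that can be added afterwards. If a purely linear query is required, I will fold the constant in by the standard trick of querying the linear part and adding the constant locally, which costs nothing extra.

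Completeness is immediate. If $\cfunc(\bz,p)=1$, then $p\equiv f g$ as polynomials, so $q_1=q_2q_3$ for every $r$, and $p(\tau_d)=0$ gives $q_4=0$. The test therefore always accepts.

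For soundness, suppose $\cfunc(\bz,p)=0$. If $p(\tau_d)\neq 0$, the test rejects with probability $1$. Otherwise $p\not\equiv fg$, so $D(T)=p(T)-f_{(\bz,p)}(T)g_{(\bz,p)}(T)$ is a nonzero polynomial. Its degree is at most $\max(2d-2,\,2(d-1))=2d-2$. Here $f$ and $g$ are fixed polynomials once $\bz$ and $p$ are fixed, because their coefficients depend only on $\bz$ and $p$ and not on $r$. So $D$ has at most $2d-2$ roots, and the acceptance probability is at most $(2d-2)/(2^{4k}-d)$.

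It remains to check this is at most $1/6$. With $d=k(m(m+3)+1)\le 2k\,4^k$ and $2^{4k}=16^k$, the ratio is at most roughly $4k\cdot4^k/16^k=4k/4^k$. For $k\ge12$ this is far below $1/6$. I expect no real obstacle here. The only point needing care is that $f$ and $g$ depend on $p$ itself, which is harmless because $p$ is fixed before $r$ is drawn.
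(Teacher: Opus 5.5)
Your proposal is correct and follows the paper's proof essentially verbatim: the same four queries $p(r)$, $f_{(\bz,p)}(r)$, $g_{(\bz,p)}(r)$, $p(\tau_d)$, with the affine constants (the paper's $f_{(\mathbf{0},p_0)}(r)$ and $g_{(\mathbf{0},p_0)}(r)$) split off and added back locally, and the same root-counting bound $(2d-2)/2^{4k}<1/6$ for the nonzero polynomial $p-f_{(\bz,p)}g_{(\bz,p)}$ of degree at most $2d-2$. Your optional restriction of $r$ away from $\tau_1,\ldots,\tau_d$ is unnecessary but harmless.
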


\begin{proof}
Choose $r\in\F_{2^{4k}}$ uniformly.
The four queries return
\[
p(r),\quad
f_{(\bz,p)}(r)-f_{(\mathbf{0},p_0)}(r),\quad
g_{(\bz,p)}(r)-g_{(\mathbf{0},p_0)}(r),\quad
p(\tau_d).
\]
These are linear functions of $\bz$ and the coefficients
of $p$: the second and third expressions subtract
the constants from the corresponding affine functions, which depend only on the circuit and $r$. Adding  constants $f_{(\mathbf{0},p_0)}(r)$ and $g_{(\mathbf{0},p_0)}(r)$
to the second and third answers gives $ f_{(\bz,p)}(r) $ and $ g_{(\bz,p)}(r)$, respectively.

The test outputs $1$ if and only if
\[
p(\tau_d)=0
\quad\text{and}\quad
p(r)=f_{(\bz,p)}(r)g_{(\bz,p)}(r).
\]

If $\cfunc(\bz,p)=1$, both equalities hold for every $r$. Suppose $\cfunc(\bz,p)=0$.
If $p(\tau_d)\ne0$, the test always outputs $0$.
Otherwise, $p-f_{(\bz,p)}g_{(\bz,p)}$ is a nonzero polynomial
of degree at most $2d-2$.
It has at most $2d-2$ roots, so
\[
\Pr_r[\text{the test outputs }1]
\le\frac{2d-2}{2^{4k}}<\frac16.\qedhere
\]
\end{proof}

\subsection{The total function}
Now we combine the fully linear PCP and Gap-Hamming function to construct our total function.
Besides $\bx\in \{0,1\}^{km}$ and $\by\in \{0,1\}^{km}$,
Alice and Bob each receive a table with $m$ positions:
\[
\bu=(\bu_1,\ldots,\bu_m),\qquad
\bv=(\bv_1,\ldots,\bv_m),
\]
where $\bu_j,\bv_j\in\bits^{4k(2d-1)}$.
At position $j$, the two table entries together specify the certificate
\[
\bc_j=\bu_j\oplus\bv_j.
\]
The total function uses the certificate at the address determined by the Gap-Hamming outputs.

\begin{definition}[Gap-Hamming with cheat sheets]
\label{def:total_function}
For inputs $(\bx,\bu)$ and $(\by,\bv)$ in $\bits^{km}\times\bits^{4km(2d-1)}$, let $\bz=\bx\oplus\by$ and $h=\adrs(\bz)\in[m]$.
Define
\[
\tGH(\bx,\bu;\by,\bv)=
\begin{cases}
\cfunc(\bz,\bc_h), & \indicate(\bz)=1,\\
0, & \indicate(\bz)=0.
\end{cases}
\]
\end{definition}

Each player  receives an input of length
$
n=km+4km(2d-1).
$
Substituting the value of $d$ gives
$
k=\Theta(\log n).
$
For the two-party communication model, Alice has $(\bx,\bu)$ and Bob has $(\by, \bv)$;
we give a randomized protocol on $\tGH$ with cost $O(\log n\log\log n)$ in the next section.

\section{An Efficient Randomized Protocol}
\label{sec:randomized-protocol}

To compute $\tGH$, Alice and Bob first estimate an address by a random sketch of inputs $(\bx,\by)$.
Then they apply the test from \cref{lem:linear-check} to estimate the $\cfunc$ function.

\begin{theorem}
\label{thm:randomized_protocol}
There is a two-party communication protocol using public randomness that computes $\tGH$ with error probability less than $1/3$ and total communication $O(k\log k)$.
\end{theorem}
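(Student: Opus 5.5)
The protocol has two phases, matching the two layers of the construction. First Alice and Bob estimate the address $h=\adrs(\bz)$ by random sampling. Then they run the test of \cref{lem:linear-check} on the certificate stored at the estimated address, evaluating its four linear queries on additive shares.

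\textbf{Phase 1 (address).} For each $i\in[k]$, the public randomness picks $t=C\log k$ coordinates $j_1,\ldots,j_t\in[m]$ independently and uniformly. Alice sends the bits $\bx_i(j_s)$, and Bob computes the sampled bits $\bz_i(j_s)$. He sets $\hat h_i=1$ if more than half of the sampled bits are $1$, and $\hat h_i=0$ otherwise.
\begin{itemize}[nosep]
\item If $|\bz_i|\le m/3$, each sampled bit is $1$ with probability at most $1/3$, so by Chernoff $\Pr[\hat h_i\ne 0]\le e^{-\Omega(t)}$.
\item The case $|\bz_i|\ge 2m/3$ is symmetric.
\item Choosing $C$ large makes each per-instance error at most $1/(10k)$.
\end{itemize}
By a union bound, whenever $\indicate(\bz)=1$ the estimate $\hat h=1+\sum_i\hat h_i2^{k-i}$ equals $\adrs(\bz)$ with probability at least $9/10$. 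Bob then sends $\hat h$ back to Alice using $k$ bits. This phase costs $O(k\log k)$.

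\textbf{Phase 2 (certificate).} The parties run the test of \cref{lem:linear-check} on $(\bz,p)$, where $p$ is encoded by $\bc_{\hat h}=\bu_{\hat h}\oplus\bv_{\hat h}$. The public randomness selects $r\in\F_{2^{4k}}$, which fixes the four linear functionals.

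The test can be evaluated from shares for two reasons. First, each bit $\bz_i(j)$, viewed as a field element, equals $\bx_i(j)+\by_i(j)$ because the characteristic is $2$. Second, each coefficient block $c_v$ of $p$ is the sum of the corresponding field elements encoded by the $4k$-bit blocks of $\bu_{\hat h}$ and $\bv_{\hat h}$, since XOR of encodings is field addition. Hence every linear query satisfies
\[
L(\bz,\bc_{\hat h})=L(\bx,\bu_{\hat h})+L(\by,\bv_{\hat h}).
\]
Alice sends her four values, which takes $16k$ bits. Bob adds his own values, adds the circuit-dependent constants $f_{(\mathbf 0,p_0)}(r)$ and $g_{(\mathbf 0,p_0)}(r)$, and outputs the verdict of the test. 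This phase costs $O(k)$.

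\textbf{Error analysis.} The main point is handling inputs outside the promise and wrong addresses, which splits into three cases.
\begin{itemize}[nosep]
\item \emph{$\indicate(\bz)=0$.} By part 2 of \cref{lem:check_func}, $\cfunc(\bz,\bc)=0$ for every $\bc$, whatever $\hat h$ is. So \cref{lem:linear-check} accepts with probability at most $1/6$, matching $\tGH=0$.
\item \emph{$\indicate(\bz)=1$ and $\tGH=1$.} If $\hat h=h$, the test accepts with probability $1$, so the error is at most $1/10$.
\item \emph{$\indicate(\bz)=1$ and $\tGH=0$.} If $\hat h=h$, the test wrongly accepts with probability at most $1/6$. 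If $\hat h\neq h$, the cell $\bc_{\hat h}$ might happen to hold the valid certificate, so we charge the full $1/10$ probability of this event. The error is at most $1/10+1/6<1/3$.
\end{itemize}
The total communication is $O(k\log k)$.
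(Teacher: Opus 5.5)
Your proposal is correct and follows essentially the same route as the paper. You sample $O(\log k)$ coordinates per Gap-Hamming instance to learn the address, then evaluate the four linear queries of \cref{lem:linear-check} on additive shares. Your three-case error analysis is a slightly finer version of the paper's two cases, with an address error of $1/10$ where the paper uses $1/6$. The one thing you leave implicit, and the paper states explicitly, is that $r$ must be drawn independently of the sampled positions; this is what makes $r$ uniform conditioned on $\hat h$, so that the test's guarantee applies to the fixed certificate $\bc_{\hat h}$.
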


\begin{proof}
\textbf{Protocol.}
Alice holds $(\bx,\bu)$, and Bob holds $(\by,\bv)$.
\begin{enumerate}[label=\Roman*).]
\item
Let $q=\lceil18\ln(6k)\rceil$.
Using public randomness, the parties sample positions $I_{i,j}$ uniformly and independently from $[m]$, for $i\in[k]$ and $j\in[q]$.
Positions may be sampled more than once.
\item
Alice sends the $kq$ bits $\bx_i(I_{i,j})$ to Bob.
Bob computes the sampled differences
\[
 \widetilde{\bz}_i(j)=\bx_i(I_{i,j})\oplus\by_i(I_{i,j})
 \qquad(i\in[k],\ j\in[q]).
\]
\item
Bob computes
\[
 h_i=\ind\left\{\frac1q\sum_{j=1}^{q}\widetilde{\bz}_i(j)\ge\frac12\right\}
 \quad \forall i\in[k],
 \qquad
 h=1+\sum_{i=1}^{k}h_i2^{k-i}.
\]
He sends $h_1,\ldots,h_k$ to Alice, so both of them know the address $h\in[m]$.
\item They follow the test in \cref{lem:linear-check} to compute $\cfunc({\bz},\bu_h\oplus \bv_h)$. Let $p_A$ and $p_B$ be the polynomials encoded by
$\bu_h$ and $\bv_h$, respectively, and let $p=p_A+p_B$.
Using public randomness independent of the sampled positions,
they sample $r\in\F_{2^{4k}}$ uniformly.
Alice sends
\[
p_A(r),\qquad f_{(\bx,p_A)}(r),\qquad
g_{(\bx,p_A)}(r),\qquad p_A(\tau_d).
\]
Bob computes
\[
\begin{aligned}
p(r)&=p_A(r)+p_B(r),\\
f_{(\bz,p)}(r)
&=f_{(\bx,p_A)}(r)+f_{(\by,p_B)}(r)
  +f_{(\mathbf{0},p_0)}(r),\\
g_{(\bz,p)}(r)
&=g_{(\bx,p_A)}(r)+g_{(\by,p_B)}(r)
  +g_{(\mathbf{0},p_0)}(r),\\
p(\tau_d)&=p_A(\tau_d)+p_B(\tau_d),
\end{aligned}
\]
and outputs
\[
b
=\ind\big\{p(\tau_d)=0
\ \text{and}\
p(r)=f_{(\bz,p)}(r)g_{(\bz,p)}(r)\big\}.
\]
\end{enumerate}

\textbf{Correctness.} Fix the entire input, and let $\bc_j=\bu_j\oplus\bv_j$ for every $j\in[m]$.
The only randomness is in the sampled positions and $r\sim \F_{2^{4k}}$.
For every $j$ with $\Pr[h=j]>0$, conditioning on $h=j$ fixes the certificate $\bc_j$ and leaves $r$ uniform.
\cref{lem:linear-check} therefore gives
\begin{equation}
\label{eq:conditional-check-error}
 \Pr[b\ne\cfunc(\bz,\bc_j)\mid h=j]<\frac16.
\end{equation}

\textbf{1). Suppose $\indicate(\bz)=1$.}
For each $i\in[k]$, the sampled bits $\widetilde{\bz}_i(1),\ldots,\widetilde{\bz}_i(q)$ are independent and have expectation  $|\bz_i|/m\in [0,1/3]\cup [2/3,1]$.
The corresponding tail bound for independent bits gives
\[
 \Pr[h_i\ne\GH(\bx_i,\by_i)]
 \le \exp\left(-2q\left(\frac16\right)^2\right)
 =e^{-q/18}
 \le\frac1{6k}.
\]
Hence,
\[
 \Pr[h\ne\adrs(\bz)]
 \le\sum_{i=1}^{k}\Pr[h_i\ne\GH(\bx_i,\by_i)]
 \le\frac16.
\]
When $h=\adrs(\bz)$, Definition~\ref{def:total_function} gives
$\tGH=\cfunc(\bz,\bc_h)$.
Equation~\eqref{eq:conditional-check-error} bounds the probability of a wrong answer on this event by less than $1/6$.
Consequently,
\[
\Pr[b\ne\tGH]
\le\Pr[h\ne\adrs(\bz)]
   +\Pr[b\ne\tGH,\ h=\adrs(\bz)]
<\frac16+\frac16=\frac13.
\]

\textbf{2). Suppose $\indicate(\bz)=0$.} By
Definition~\ref{def:total_function}, $\tGH=0$; and Lemma~\ref{lem:check_func} gives
\[
 \cfunc(\bz,\bc_j)=0\qquad\text{for every }j\in[m].
\]
Thus every possible address selects a certificate for which $\cfunc$ is $0$.
Applying \eqref{eq:conditional-check-error} at each such address yields
\[
\begin{aligned}
 \Pr[b\ne\tGH]
 =\sum_{\substack{j\in[m] }}
   \Pr[h=j]\Pr[b=1\mid h=j]
 <\frac16\sum_{\substack{j\in[m]}}\Pr[h=j]
 =\frac16.
\end{aligned}
\]

\textbf{Cost.}
Steps II and III use $kq$ and $k$ bits, respectively.
In Step IV, Alice sends four field elements, each encoded
by $4k$ bits.
The total communication is
$kq+k+16k=O(k\log k)$.
\end{proof}

\section{Rectangle Bounds}
\label{sec:rectangle-bounds}

In this section,
we show that any monochromatic rectangle of $\tGH$ has exponentially small density under the uniform distribution.
Let the input space be
\[
 U=V=(\{0,1\}^{4k(2d-1)})^m,
 \qquad A=\{0,1\}^{km}\times U,
 \qquad B=\{0,1\}^{km}\times V.
\]
Let $\mu_A$ and $\mu_B$ be the uniform measures on $A$ and $B$, and  $\mu=\mu_A\times\mu_B$. A rectangle is a direct product like $A'\times B'$ with $A'\subseteq A$ and $B'\subseteq B$. Its density is $\mu(A'\times B') :=\mu_A(A')\mu_B(B')$. For any $b\in\{0,1\}$, we call  $A'\times B'$ a $b$-rectangle if $\tGH$ equals $b$ on every input in it.
We will prove the following theorem.

\begin{theorem}\label{thm:rectangle_bound}
Under the uniform measure $\mu$, every monochromatic rectangle of $\tGH$ has density at most $2^{-m/(8k)}$.
\end{theorem}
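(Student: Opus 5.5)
We treat the two colors separately. Throughout, write $N=2^{4k(2d-1)}$ for the number of possible certificate shares at a single table position, so $U=V=(\bits^{4k(2d-1)})^m$ has $N^m$ elements.

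\textbf{$1$-rectangles.} Here we bound the density of $1$-inputs directly. Fix $(\bx,\bu)$ and $\by$ with $\indicate(\bz)=1$, and let $h=\adrs(\bz)$. By \cref{lem:check_func}, exactly one certificate $\bc^*$ is accepted. Hence, over a uniform $\bv$, the probability of a $1$-input equals $\Pr[\bu_h\oplus\bv_h=\bc^*]=1/N$. So $\mu(\tGH^{-1}(1))\le 1/N$, and since $N=2^{4k(2d-1)}\ge 2^{m}$, every $1$-rectangle has density at most $2^{-m}\le 2^{-m/(8k)}$.

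\textbf{$0$-rectangles.} Let $A'\times B'$ be a $0$-rectangle with $\mu(A'\times B')>2^{-m/(8k)}$; we derive a contradiction by exhibiting a $1$-input in it. We follow the overview in five steps.

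\emph{Step 1 (prune base inputs).} For $\bx\in\bits^{km}$, let $U_\bx=\{\bu:(\bx,\bu)\in A'\}$, and define $V_\by$ analogously. Keep only those $\bx$ with $|U_\bx|\ge \epsilon N^m$, for $\epsilon=\mu_A(A')/2$, and similarly for $\by$. The retained sets $X,Y$ then have densities at least $\mu_A(A')/2$ and $\mu_B(B')/2$.

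\emph{Step 2 (few bad addresses).} For each retained $\bx$ and address $j$, let $S_{\bx,j}\subseteq\bits^{4k(2d-1)}$ be the set of values $\bu_j$ taken over $\bu\in U_\bx$. Since $U_\bx\subseteq\prod_j S_{\bx,j}$, the number $t$ of addresses with $|S_{\bx,j}|\le N/2$ satisfies $2^{-t}\ge\epsilon$, so $t\le\log(1/\epsilon)\le m/(8k)+1$. Call the remaining addresses good for $\bx$.

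\emph{Step 3 (common address).} Averaging over addresses, some $h$ is good simultaneously for a $1-O(1/k)$ fraction of $X$ and of $Y$; averaging also gives a uniform address. We need more than that: the pairs in $X_h\times Y_h$ must actually route to $h$.

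\emph{Step 4 (the main obstacle).} We must find $\bx\in X_h$ and $\by\in Y_h$ with $\indicate(\bz)=1$ and $\adrs(\bz)=h$, that is, each block weight $|\bz_i|$ lies on the correct side of the gap as dictated by $h_i$. A plain pigeonhole choice of $h$ does not guarantee this. I would instead choose $h$ more carefully, in one of two ways:
\begin{itemize}
\item a weighted double count: sum over $\bx\in X,\by\in Y$ with $\indicate(\bz)=1$ of the indicator that $\adrs(\bz)$ is good for both; or
\item a structural choice: take $h$ corresponding to $\bz$ having all blocks of weight $\le m/3$ (or all $\ge 2m/3$ via complementing), and use a harmonic-analysis or isoperimetric bound.
\end{itemize}
The quantitative input for the first option is that, for sets $X,Y$ of density $2^{-m/(8k)}$ in $\bits^{km}$, a fair fraction of pairs satisfy all $k$ promises with a prescribed pattern. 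For each block, the probability that a random pair lands at distance $\le m/3$ is $2^{-\Omega(m)}$. So I would pass to blockwise structure: restrict to a subcube, or use that large sets contain pairs at every prescribed small and large distance in each block (Frankl–Rödl-type or a direct isoperimetric argument per block, iterated over $k$ blocks via fixing other blocks). This is where the exponent $m/(8k)$ is spent: each of the $k$ blocks tolerates density loss $2^{-m/8k}$, well below the $2^{-\Omega(m)}$ threshold.

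\emph{Step 5 (matching shares).} Given such $\bx,\by$ with address $h$, let $\bc^*$ be the unique valid certificate. Both $S_{\bx,h}$ and $\bc^*\oplus S_{\by,h}$ have size greater than $N/2$, so they intersect. This yields $\bu\in U_\bx$ and $\bv\in V_\by$ with $\bu_h\oplus\bv_h=\bc^*$, hence $\tGH=1$ inside the rectangle, a contradiction.
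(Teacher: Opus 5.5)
\textbf{Gap in Step 4.} Your treatment of $1$-rectangles and your Steps 1, 2, 3 and 5 agree with the paper. Step 4, which you flag as the main obstacle, is never proved, and it is the core of the $0$-rectangle bound. None of your suggested routes works as stated:
\begin{itemize}
\item \emph{Weighted double count.} This needs a fair fraction of pairs in $X\times Y$ to satisfy all $k$ promises with a prescribed pattern, and that is false. Take $X=Y$ a Hamming ball of density about $2^{-m/(8k)}$ in $\bits^{km}$. Typical pairs are at distance about $km/2$, so only an exponentially small fraction of pairs route to any given address. You also have no control over how those few pairs interact with the bad addresses of $\bx$ and $\by$.
\item \emph{Structural choice of the all-zeros or all-ones address.} This fails because $h$ must be good for many retained inputs. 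Nothing prevents addresses $1$ and $m$ from being bad for every $\bx$, for instance if $A'$ confines $\bu_1$ and $\bu_m$ to fixed halves, which costs only a factor $4$ in density.
\item \emph{Iterating block by block with the other blocks fixed.} The fibers of $X$ and $Y$ over fixed other blocks need not be large, and the pair must work for all blocks at once.
\end{itemize}

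\textbf{Missing idea: absorb the pattern by a shift.} No special choice of $h$ is needed; your pigeonhole $h$ from Step 3 already works. Put $\mathbf w=(h_1\mathbf 1_m,\ldots,h_k\mathbf 1_m)$ and replace $Y_h$ by $Y_h\oplus\mathbf w$, which has the same density. It then suffices to find $\bx\in X_h$ and $\by\in Y_h$ with $|\bx\oplus\by\oplus\mathbf w|\le m/3$. This total bound controls every block: it gives $|\bz_i|\le m/3$ when $h_i=0$ and $|\bz_i|\ge 2m/3$ when $h_i=1$, hence $\indicate(\bz)=1$ and $\adrs(\bz)=h$.

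\textbf{Existence of the close pair.} Such a pair exists by concentration over the whole cube $\bits^{km}$, not block by block. Apply the bounded differences inequality to $f(\bz)=\min_{\ba\in E}|\bz\oplus\ba|$. This shows that any two sets of densities $\alpha',\beta'$ in $\bits^{D}$ contain a pair at distance at most $\sqrt{D\ln(1/(\alpha'\beta'))}$. Take $D=km$ and $\alpha'\beta'\ge\alpha\beta/16$ with $\alpha\beta\ge 2^{-m/(8k)}$. The bound is then at most $m\sqrt{(1/8+4k/m)\ln 2}\le m/3$ for $k\ge 12$. This is where the exponent $m/(8k)$ is spent: the squared target distance $m^2/9$ divided by the dimension $km$ is $\Theta(m/k)$.
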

In view of our choice of the parameters, $m=2^k, d = O(km^2)$, thus $m=\poly(n)$ for the input length $n$, establishing~\cref{thm:main}.

\medskip

Before proving Theorem~\ref{thm:rectangle_bound}, we need the following two lemmas. We first show that two large sets of binary strings must contain a pair with small Hamming distance.

\begin{lemma}
\label{lem:distance}
For any integer $d\ge1$, let $E,Q\subseteq\{0,1\}^d$ be nonempty sets, and let
\[
 \alpha :=\frac{|E|}{2^d},\qquad \beta :=\frac{|Q|}{2^d}.
\]
There exist $\ba\in E$ and $\bb\in Q$ such that
\[
 \begin{aligned}
 |\ba\oplus\bb|
 \le\sqrt{\frac d2\ln\frac1\alpha}
      +\sqrt{\frac d2\ln\frac1\beta}
 \le\sqrt{d\ln\frac1{\alpha\beta}}.
 \end{aligned}
\]
\end{lemma}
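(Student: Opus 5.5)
The plan is to use concentration of measure on the hypercube, in the form of the blow-up (Harper/McDiarmid) inequality. For $t\ge0$, write $N_t(S)=\{\bc\in\bits^d:\exists\,\ba\in S,\ |\ba\oplus\bc|\le t\}$ for the Hamming $t$-neighborhood of $S\subseteq\bits^d$. The key quantitative step is the following claim: if $S$ has density $\alpha$, then for every $t\ge t_\alpha:=\sqrt{\frac d2\ln\frac1\alpha}$,
\[
\frac{|N_t(S)|}{2^d}\ \ge\ 1-\exp\!\Big(-\tfrac{2}{d}\big(t-t_\alpha\big)^2\Big).
\]

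To prove the claim, consider the $1$-Lipschitz function $\phi(\bc)=\min_{\ba\in S}|\ba\oplus\bc|$ on uniform $\bc\in\bits^d$. It is $1$-Lipschitz in each coordinate, so McDiarmid's inequality gives both $\Pr[\phi\le\E\phi-s]\le e^{-2s^2/d}$ and $\Pr[\phi\ge\E\phi+s]\le e^{-2s^2/d}$. Since $\Pr[\phi=0]=\alpha>0$, the lower tail with $s=\E\phi$ yields $\alpha\le e^{-2(\E\phi)^2/d}$, that is, $\E\phi\le t_\alpha$. The upper tail with $s=t-\E\phi\ge t-t_\alpha$ then gives $\Pr[\phi>t]\le e^{-2(t-t_\alpha)^2/d}$, which is the claim.

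Next I would apply this to both sets. Put $t_\beta=\sqrt{\frac d2\ln\frac1\beta}$ and choose radii $t_1=t_\alpha+s$ and $t_2=t_\beta+s'$ so that the neighborhoods $N_{t_1}(E)$ and $N_{t_2}(Q)$ each have density strictly greater than $1/2$. Then the two neighborhoods intersect in some $\bc$, and the triangle inequality produces $\ba\in E$, $\bb\in Q$ with $|\ba\oplus\bb|\le t_1+t_2$.

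The obstacle is that this gives the stated bound only up to an additive $O(\sqrt d)$, whereas the lemma asks for exactly $t_\alpha+t_\beta$. To remove the slack I would not use density $1/2$ thresholds. Instead, take $s$ with $e^{-2s^2/d}=\beta$, i.e.\ $s=t_\beta$, which gives $|N_{t_\alpha+t_\beta}(E)|/2^d>1-\beta$ provided the McDiarmid bound is strict (or after a perturbation argument). Then $N_{t_\alpha+t_\beta}(E)$ and $Q$, of densities exceeding $1-\beta$ and equal to $\beta$ respectively, must intersect. So some $\bb\in Q$ is within distance $t_\alpha+t_\beta$ of some $\ba\in E$, exactly as required. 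Strictness is a small point: if $\Pr[\phi>t]=\beta$ held with equality, then the complement of $N_t(E)$ would be exactly $Q$-sized. To close this gap I would use the strict form of the bounded differences inequality for non-constant $\phi$, or handle the degenerate case $E=\bits^d$ separately. Finally, the second inequality $\sqrt{x}+\sqrt{y}\le\sqrt{2(x+y)}$, applied with $x=\frac d2\ln\frac1\alpha$ and $y=\frac d2\ln\frac1\beta$, gives $t_\alpha+t_\beta\le\sqrt{d\ln\frac1{\alpha\beta}}$.
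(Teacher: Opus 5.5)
Your proof follows the paper's route. The paper also uses $f(\bz)=\min_{\ba\in E}|\bz\oplus\ba|$, gets $\E f\le t_\alpha$ from the lower McDiarmid tail at $\Pr[f=0]=\alpha$, and uses the upper tail to bound the distance from $Q$ to $E$ by $t_\alpha+t_\beta$. Your neighborhood claim and its proof are correct.

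The one loose end is the step you flagged yourself. With the radius fixed in advance at $t=t_\alpha+t_\beta$, the non-strict bound only gives $|N_t(E)|/2^d\ge 1-\beta$. You then appeal to a ``strict form'' of the bounded differences inequality that you neither cite nor prove. That strict form is true for deviations $s>0$: the Markov step in the Chernoff proof is strict unless $\phi-\E\phi=s$ almost surely, and the remaining case $\beta=1$ is trivial. As written, though, this step is asserted rather than proved, and ``perturbation argument'' does not specify anything. Also, your two options (strict bound for non-constant $\phi$, separate treatment of $E=\bits^d$) are the two halves of one case split, not alternatives.

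The paper avoids the issue by choosing the radius adaptively, and you can do the same without changing your claim. Let $\bb\in Q$ minimize $\phi$, so $Q\subseteq\{\phi\ge\phi(\bb)\}$.
\begin{itemize}
\item If $\phi(\bb)\le\E\phi\le t_\alpha$, you are done.
\item Otherwise the ordinary upper tail gives $\beta\le\Pr[\phi\ge\phi(\bb)]\le e^{-2(\phi(\bb)-\E\phi)^2/d}$. Hence $\phi(\bb)\le\E\phi+t_\beta\le t_\alpha+t_\beta$.
\end{itemize}
In your contradiction language: if $Q\cap N_t(E)=\emptyset$, then $t':=\min_{\bb\in Q}\phi(\bb)>t\ge\E\phi+t_\beta$. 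So $\beta\le\Pr[\phi\ge t']\le e^{-2(t'-\E\phi)^2/d}<e^{-2t_\beta^2/d}=\beta$, a contradiction.

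Alternatively, since $\phi$ is integer-valued, $\Pr[\phi>t]=\Pr[\phi\ge\lfloor t\rfloor+1]$, and the same strict comparison gives $|N_t(E)|/2^d>1-\beta$. With either replacement, no strict concentration inequality or perturbation is needed and your argument is complete. The density-$1/2$ detour can be dropped.
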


\begin{proof}
Let $\bZ$ be uniform on $\{0,1\}^d$, and define
\[
 f(\bz)=\min_{\ba\in E}|\bz\oplus\ba|,
 \qquad s=\E[f(\bZ)].
\]
The bits of $\bZ$ are independent, and changing one bit changes $f$ by at most $1$. The bounded differences inequality therefore gives, for every $t\ge0$,
\[
 \begin{aligned}
 \Pr[f(\bZ)-s\le-t]\le e^{-2t^2/d},\quad
 \Pr[f(\bZ)-s\ge t]\le e^{-2t^2/d}.
 \end{aligned}
\]
Since $f(\bz)=0$ exactly when $\bz\in E$, the first inequality with $t=s$ gives
\[
 \alpha=\Pr[f(\bZ)=0]\le e^{-2s^2/d}.
\]
Taking logarithms yields
\[
 s\le\sqrt{\frac d2\ln\frac1\alpha}.
\]
Choose $\bb\in Q$ minimizing $f(\bb)$, and choose $\ba\in E$ such that $|\ba\oplus\bb|=f(\bb)$. If $f(\bb)\le s$, the first claimed bound follows immediately. If $f(\bb)>s$, every $\bz\in Q$ satisfies $f(\bz)\ge f(\bb)$, so the second tail inequality gives
\[
 \beta\le\Pr[f(\bZ)\ge f(\bb)]
 \le e^{-2(f(\bb)-s)^2/d}.
\]
Taking logarithms again yields
\[
 f(\bb)-s\le\sqrt{\frac d2\ln\frac1\beta}.
\]
Adding the bounds on $s$ and $f(\bb)-s$ proves the first claimed inequality. The inequality $(\sqrt a+\sqrt b)^2\le2(a+b)$ for $a,b\ge0$ proves the second one.
\end{proof}

\begin{samepage}
We next show that a large set of tables contains more than half of all possible vectors at most addresses.

\begin{lemma}
\label{lem:projection}
Let $\mathcal U\subseteq U$ be nonempty, and set
$
 \sigma={|\mathcal U|}/{2^{m\cdot4k(2d-1)}}.
$
For each address $h\in[m]$, define the set of vectors that occur at that address by
$
\pi_h(\mathcal U)=\{\bu_h:\bu\in\mathcal U\}.
$
Then at most $\log(1/\sigma)$ addresses contain at most half of all possible vectors:
\[
 \left|\{h\in[m]:|\pi_h(\mathcal U)|\le2^{4k(2d-1)-1}\}\right|
 \le\log( 1/\sigma).
\]
\end{lemma}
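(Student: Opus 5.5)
The plan is to use the product structure of $U$ and bound $|\mathcal U|$ by the product of its coordinate projections. Write $N=4k(2d-1)$, so that $U=(\bits^{N})^m$ and $|U|=2^{Nm}$. Every table $\bu\in\mathcal U$ satisfies $\bu_h\in\pi_h(\mathcal U)$ for every $h\in[m]$. Hence $\mathcal U$ is contained in the product set $\prod_{h=1}^m \pi_h(\mathcal U)$, which gives
\[
 |\mathcal U|\le\prod_{h=1}^m|\pi_h(\mathcal U)|.
\]

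Next, let $S=\{h\in[m]:|\pi_h(\mathcal U)|\le 2^{N-1}\}$ be the set of addresses in question. We split the product according to $S$:
\begin{itemize}
\item for $h\in S$, bound the factor $|\pi_h(\mathcal U)|$ by $2^{N-1}$;
\item for $h\notin S$, use the trivial bound $|\pi_h(\mathcal U)|\le 2^N$.
\end{itemize}
This yields $|\mathcal U|\le 2^{Nm-|S|}$. Dividing by $2^{Nm}$ gives $\sigma\le 2^{-|S|}$. Because $\mathcal U$ is nonempty, $\sigma>0$, so taking logarithms is legitimate and gives $|S|\le\log(1/\sigma)$, which is the claim.

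The argument is elementary and I do not expect any step to be an obstacle. The only point needing care is to check that the threshold ``at most half'' matches $2^{N-1}$ exactly, so that each bad address loses a full factor of $2$.
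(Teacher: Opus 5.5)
Your proposal is correct and is essentially the paper's own proof: you contain $\mathcal U$ in the product $\prod_{h}\pi_h(\mathcal U)$, lose a factor of $2$ at each address in $S$, and conclude $\sigma\le 2^{-|S|}$. Your remark that nonemptiness gives $\sigma>0$, so that taking the logarithm is legitimate, is a small detail the paper leaves implicit.
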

\end{samepage}

\begin{proof}
Let
\[
 S=\{h\in[m]:|\pi_h(\mathcal U)|\le2^{4k(2d-1)-1}\}.
\]
Every table in $\mathcal U$ chooses its vector at address $h$ from $\pi_h(\mathcal U)$. Thus
\[
 \begin{aligned}
 |\mathcal U|
\le\prod_{h=1}^m|\pi_h(\mathcal U)|
 \le2^{m\cdot4k(2d-1)-|S|}.
 \end{aligned}
\]
Dividing by $2^{m\cdot4k(2d-1)}$ gives $\sigma\le2^{-|S|}$.
\end{proof}

\subsection{Proof of Theorem~\ref{thm:rectangle_bound}}

We use  Lemma~\ref{lem:distance} and Lemma~\ref{lem:projection} to rule out large $0$-rectangles, and the uniqueness statement in  Lemma~\ref{lem:check_func} to bound the density of $1$-rectangles.

\subsubsection*{$0$-rectangles.}
Suppose a $0$-rectangle $A'\times B'$ has
\[
 \alpha=\mu_A(A'),\qquad \beta=\mu_B(B'),\qquad
 \alpha\beta\ge2^{-m/(8k)}.
\]
We will find a $1$-input in $A'\times B'$, yielding a contradiction. This argument has four steps.

\medskip
\textbf{Step 1.}  We first  keep those $\bx$ and $\by$ that allow many choices of their tables. For each $\bx,\by\in\{0,1\}^{km}$, define the corresponding tables as
\[
 \mathcal U_{\bx}=\{\bu\in U:(\bx,\bu)\in A'\},
 \qquad
 \mathcal V_{\by}=\{\bv\in V:(\by,\bv)\in B'\}.
\]
Define their density by
\[
 \alpha_{\bx}=\frac{|\mathcal U_{\bx}|}{2^{m\cdot4k(2d-1)}},
 \qquad
 \beta_{\by}=\frac{|\mathcal V_{\by}|}{2^{m\cdot4k(2d-1)}}.
\]
Define the strings for which these densities are at least half:
\[
 X=\{\bx:\alpha_{\bx}\ge\alpha/2\},
 \qquad Y=\{\by:\beta_{\by}\ge\beta/2\}.
\]
Considering  the two cases  $X$ and $X^c$, we have
\[
 \alpha=\sum_{\bx} 2^{-km}\cdot \alpha_{\bx}
 \le\frac{|X|}{2^{km}}+\frac\alpha2.
\]
The same argument applies to $Y$, so
\[
 \frac{|X|}{2^{km}}\ge\frac\alpha2,
 \qquad \frac{|Y|}{2^{km}}\ge\frac\beta2.
\]

\medskip
\textbf{Step 2.} \label{step:large-projections}  We find one address where many inputs on each side allow more than half of all possible entries. For each $\bx\in X$ and $\by\in Y$, let
\[
 \begin{aligned}
 S_{\bx}=\{h\in[m]:|\pi_h(\mathcal U_{\bx})|\le2^{4k(2d-1)}/2\},\quad
 T_{\by}=\{h\in[m]:|\pi_h(\mathcal V_{\by})|\le2^{4k(2d-1)}/2\}.
 \end{aligned}
\]
Lemma~\ref{lem:projection} gives
\[
 |S_{\bx}|\le\log\frac1{\alpha_{\bx}}\le\log\frac2\alpha,
 \qquad
 |T_{\by}|\le\log\frac1{\beta_{\by}}\le\log\frac2\beta.
\]
Define the addresses that occur in these sets for more than half of $X$ or $Y$ by
\[
 \begin{aligned}
 S=\{h\in[m]:|\{\bx\in X:h\in S_{\bx}\}|>|X|/2\},\quad
 T=\{h\in[m]:|\{\by\in Y:h\in T_{\by}\}|>|Y|/2\}.
 \end{aligned}
\]
Counting the pairs $(h,\bx)$ with $h\in S_{\bx}$ in two different ways gives
\[
 \begin{aligned}
 \frac{|S|\,|X|}{2}
 \le\sum_{h=1}^m|\{\bx\in X:h\in S_{\bx}\}|
 =\sum_{\bx\in X}|S_{\bx}|
 \le|X|\log\frac2\alpha.
 \end{aligned}
\]
Thus $|S|\le 2\log(2/\alpha)$; counting the pairs $(h,\by)$ in the same way bounds $|T|\le 2\log (2/\beta)$. Consequently,
\[
 \begin{aligned}
 |S\cup T|
 \le2\log\frac2\alpha+2\log\frac2\beta
 =2\log\frac1{\alpha\beta}+4
 \le\frac{m}{4k}+4<m.
 \end{aligned}
\]
So there exists an address  $h\in[m]\setminus(S\cup T)$. Define
\[
 X_h=\{\bx\in X:h\notin S_{\bx}\},
 \qquad Y_h=\{\by\in Y:h\notin T_{\by}\}.
\]
Since $h\notin S\cup T$,
$$
\bigl|\{x\in X:h\in S_x\}\bigr|\le |X|/2,\qquad
\bigl|\{y\in Y:h\in T_y\}\bigr|\le |Y|/2.
$$
Hence
\[
 \frac{|X_h|}{2^{km}}\ge \frac{|X|}{2^{km+1}} \ge  \frac\alpha4,
 \qquad \frac{|Y_h|}{2^{km}}\ge \frac{|Y|}{2^{km+1}}\ge\frac\beta4.
\]
Moreover, their definitions give, for every $\bx\in X_h$ and $\by\in Y_h$,
\[
 |\pi_h(\mathcal U_{\bx})|>2^{4k(2d-1)-1},
 \qquad |\pi_h(\mathcal V_{\by})|>2^{4k(2d-1)-1}.
\]

\medskip
\textbf{Step 3.} We choose $\bx,\by$ with $\indicate(\bx\oplus\by)=1$ and $\adrs(\bx\oplus\by)=h$.
Write $h=1+\sum_{i=1}^k h_i2^{k-i}$, where $h_i\in\{0,1\}$. Let $\mathbf1_m$ be the all-$1$ vector of length $m$. Adding the fixed vector $(h_1\mathbf1_m,\ldots,h_k\mathbf1_m)$ by XOR preserves the size of $Y_h$. Lemma~\ref{lem:distance}, applied to $X_h$ and the set
\[
 \{\by\oplus(h_1\mathbf1_m,\ldots,h_k\mathbf1_m):\by\in Y_h\},
\]
therefore gives $\bx\in X_h$ and $\by\in Y_h$ such that the distance between $\bx$ and $\by\oplus(h_1\mathbf1_m,\ldots,h_k\mathbf1_m)$ is
\begin{equation}\label{eq:address-distance}
 \begin{aligned}
 |\bx\oplus\by\oplus(h_1\mathbf1_m,\ldots,h_k\mathbf1_m)|
 \le\sqrt{km\ln\frac{16}{\alpha\beta}}
 \le m\sqrt{\left(\frac18+\frac{4k}{m}\right)\ln2}\le \frac{m}{3}.
 \end{aligned}
\end{equation}

So each block contributes at most $m/3$. If $h_i=0$,
$
|\bx_i\oplus\by_i|<m/3;
$
if $h_i=1$,
$
|\bx_i\oplus\by_i|>2m/3.
$
The definitions of $\indicate$ and $\adrs$ give
\[
 \indicate(\bx\oplus\by)=1,
 \qquad \adrs(\bx\oplus\by)=h.
\]

\medskip
\textbf{Step 4.}  We choose tables that give output $1$ inside $A'\times B'$, contradicting the assumed output $0$.
Lemma~\ref{lem:check_func} gives a certificate $\bc^*$ such that
\[
 \cfunc(\bx\oplus\by,\bc^*)=1.
\]
By Step 2, more than half of the vectors in $\{0,1\}^{4k(2d-1)}$ occur in $\pi_h(\mathcal U_{\bx})$. The set
\[
 \{\bv_h\oplus\bc^*:\bv_h\in\pi_h(\mathcal V_{\by})\}
\]
also contains more than half of these vectors, since XOR with $\bc^*$ doesn't change the size. If these two sets were disjoint, their union would contain more than $2^{4k(2d-1)}$ vectors in a space of that size. Therefore, they intersect.

By the definition of $\pi_h$, there exist tables $\bu\in\mathcal U_{\bx}$ and $\bv\in\mathcal V_{\by}$ such that
\[
\bu_h\oplus\bv_h=\bc^*.
\]
The definitions of $\mathcal U_{\bx}$ and $\mathcal V_{\by}$ ensure that
\[
 ((\bx,\bu),(\by,\bv))\in A'\times B'.
\]
Since $\indicate(\bx\oplus\by)=1$ and $\adrs(\bx\oplus\by)=h$, Definition~\ref{def:total_function} gives
\[
 \tGH(\bx,\bu;\by,\bv)
 =\cfunc(\bx\oplus\by,\bc^*)=1.
\]
This contradicts the assumption that $A'\times B'$ is a $0$-rectangle.

\subsubsection*{$1$-rectangles.}
We bound the density of all $1$-inputs. Fix $\bx$ and $\by$. If $\indicate(\bx\oplus\by)=0$, then $\tGH=0$ for all tables. If $\indicate(\bx\oplus\by)=1$, the address $h=\adrs(\bx\oplus\by)$ is fixed. Lemma~\ref{lem:check_func} gives exactly one certificate $\bc^*$ for which $\cfunc(\bx\oplus\by,\bc^*)=1$. By Definition~\ref{def:total_function}, the output equals $1$ precisely when
\[
 \bu_h\oplus\bv_h=\bc^*.
\]
Under the uniform distribution on $U\times V$, the vectors $\bu_h$ and $\bv_h$ are independent and uniform on $\{0,1\}^{4k(2d-1)}$. For every choice of $\bu_h$, exactly one of the $2^{4k(2d-1)}$ choices of $\bv_h$ satisfies this equation. Thus, in both cases for $\indicate(\bx\oplus\by)$,
\[
 \Pr_{\bu,\bv}[\tGH(\bx,\bu;\by,\bv)=1]
 \le2^{-4k(2d-1)}.
\]
For any $1$-rectangle $A'\times B'$, every input in it has output $1$. Averaging over $\bx$ and $\by$ therefore gives
\[
 \mu(A'\times B')\le\Pr_{\mu}[\tGH=1]\le2^{-4k(2d-1)}<2^{-m/(8k)}.
\]

\bigskip
\section*{Acknowledgement}
The authors used GPT-6 for exploratory analysis, idea testing,
editorial polishing.
The authors reviewed and finalized all mathematical claims and proofs, and take full responsibility for the correctness, exposition, and attribution in the final manuscript.

\bibliographystyle{alpha}
\bibliography{references}

@inproceedings{yao1979complexity,
  author = {Yao, Andrew Chi-Chih},
  title = {Some Complexity Questions Related to Distributive Computing (Preliminary Report)},
  booktitle = {11th Annual ACM Symposium on Theory of Computing (STOC)},
  pages = {209--213},
  year = {1979},
  publisher = {ACM},
  doi = {10.1145/800135.804414}
}

@inproceedings{boneh2019zero,
  title={Zero-knowledge proofs on secret-shared data via fully linear PCPs},
  author={Boneh, Dan and Boyle, Elette and Corrigan-Gibbs, Henry and Gilboa, Niv and Ishai, Yuval},
  booktitle={Annual international cryptology conference},
  pages={67--97},
  year={2019},
  organization={Springer}
}

@inproceedings{anshu2016separations,
  title={Separations in communication complexity using cheat sheets and information complexity},
  author={Anshu, Anurag and Belovs, Aleksandrs and Ben-David, Shalev and G{\"o}{\"o}s, Mika and Jain, Rahul and Kothari, Robin and Lee, Troy and Santha, Miklos},
  booktitle={2016 IEEE 57th Annual Symposium on Foundations of Computer Science (FOCS)},
  pages={555--564},
  year={2016},
  organization={IEEE}
}

@inproceedings{aaronson2016separations,
  title={Separations in query complexity using cheat sheets},
  author={Aaronson, Scott and Ben-David, Shalev and Kothari, Robin},
  booktitle={Proceedings of the forty-eighth annual ACM symposium on Theory of Computing},
  pages={863--876},
  year={2016}
}

@inproceedings{babai1986complexity,
  author = {Babai, L{\'a}szl{\'o} and Frankl, P{\'e}ter and Simon, J{\'a}nos},
  title = {Complexity Classes in Communication Complexity Theory},
  booktitle = {27th Annual Symposium on Foundations of Computer Science (FOCS)},
  pages = {337--347},
  year = {1986},
  publisher = {IEEE},
  doi = {10.1109/SFCS.1986.15}
}

@inproceedings{impagliazzo2010communication,
  author = {Impagliazzo, Russell and Williams, Ryan},
  title = {Communication Complexity with Synchronized Clocks},
  booktitle = {25th Annual IEEE Conference on Computational Complexity (CCC)},
  pages = {259--269},
  year = {2010},
  publisher = {IEEE},
  url = {https://people.csail.mit.edu/rrw/cc-clocks-conf.pdf}
}

@inproceedings{papakonstantinou2014overlays,
  author = {Papakonstantinou, Periklis and Scheder, Dominik and Song, Hao},
  title = {Overlays and Limited Memory Communication},
  booktitle = {29th Annual IEEE Conference on Computational Complexity (CCC)},
  pages = {298--308},
  year = {2014},
  publisher = {IEEE},
  doi = {10.1109/CCC.2014.37}
}

@article{goos2018landscape,
  author = {G{\"o}{\"o}s, Mika and Pitassi, Toniann and Watson, Thomas},
  title = {The Landscape of Communication Complexity Classes},
  journal = {Computational Complexity},
  volume = {27},
  number = {2},
  pages = {245--304},
  year = {2018},
  note = {Preliminary version in ICALP 2016},
  doi = {10.1007/s00037-018-0166-6}
}

@inproceedings{goos2017query,
  author = {G{\"o}{\"o}s, Mika and Kamath, Pritish and Pitassi, Toniann and Watson, Thomas},
  title = {Query-to-Communication Lifting for {$P^{NP}$}},
  booktitle = {32nd Computational Complexity Conference (CCC 2017)},
  series = {Leibniz International Proceedings in Informatics (LIPIcs)},
  volume = {79},
  pages = {12:1--12:16},
  year = {2017},
  publisher = {Schloss Dagstuhl--Leibniz-Zentrum f{\"u}r Informatik},
  doi = {10.4230/LIPIcs.CCC.2017.12}
}

@inproceedings{chattopadhyay2019equality,
  author = {Chattopadhyay, Arkadev and Lovett, Shachar and Vinyals, Marc},
  title = {Equality Alone Does not Simulate Randomness},
  booktitle = {34th Computational Complexity Conference (CCC 2019)},
  series = {Leibniz International Proceedings in Informatics (LIPIcs)},
  volume = {137},
  pages = {14:1--14:11},
  year = {2019},
  publisher = {Schloss Dagstuhl--Leibniz-Zentrum f{\"u}r Informatik},
  doi = {10.4230/LIPIcs.CCC.2019.14}
}

@article{hatami2026structure,
  author = {Hatami, Hamed and Hatami, Pooya},
  title = {Guest Column: Structure in Communication Complexity and Constant-Cost Complexity Classes},
  journal = {ACM SIGACT News},
  volume = {55},
  number = {1},
  pages = {67--93},
  year = {2024},
  doi = {10.1145/3654780.3654788}
}

@inproceedings{pitassi2020nondeterministic,
  author = {Pitassi, Toniann and Shirley, Morgan and Watson, Thomas},
  title = {Nondeterministic and Randomized Boolean Hierarchies in Communication Complexity},
  booktitle = {47th International Colloquium on Automata, Languages, and Programming (ICALP 2020)},
  series = {Leibniz International Proceedings in Informatics (LIPIcs)},
  volume = {168},
  pages = {92:1--92:19},
  year = {2020},
  publisher = {Schloss Dagstuhl--Leibniz-Zentrum f{\"u}r Informatik},
  doi = {10.4230/LIPIcs.ICALP.2020.92}
}

@misc{gavinsky2026quantum,
  author = {Gavinsky, Dmytro},
  title = {On the Quantum Communication Complexity of Total Functions},
  howpublished = {arXiv:2608.18784v1, \url{https://arxiv.org/abs/2608.18784v1}},
  year = {2026},
  note = {Preliminary manuscript, August 19, 2026}
}

@article{nisan1995rank,
  author = {Nisan, Noam and Wigderson, Avi},
  title = {On Rank vs.\ Communication Complexity},
  journal = {Combinatorica},
  volume = {15},
  number = {4},
  pages = {557--565},
  year = {1995},
  doi = {10.1007/BF01192527},
  url = {https://www.math.ias.edu/~avi/PUBLICATIONS/MYPAPERS/NOAM/RANK/journal.pdf}
}

\appendix
\section{Auxiliary Proofs}
\label{app:auxiliary-proofs}

\begin{proof}[Proof of Lemma~\ref{lem:check_func}]
Run the circuit on $\bz$, and let $a_v^*,b_v^*$ be the actual inputs to its $v$-th $\andgadget$ gate.

\medskip
\textbf{Part 1), existence.}
Suppose $\indicate(\bz)=1$.
Define
\[
p^*(T)=
\left(\sum_{v=1}^d\delta_v(T)a_v^*\right)
\left(\sum_{v=1}^d\delta_v(T)b_v^*\right).
\]
This polynomial has degree at most $2d-2$, so its coefficients define a certificate.
At each $\tau_v$,
\[
p^*(\tau_v)=a_v^*b_v^*.
\]
We prove by induction on $v$ that using these claimed outputs preserves the actual inputs and output of each $\andgadget$ gate.
Before the first such gate, the circuit uses only its original inputs, constants, $\mathsf{NOT}$, and $\mathsf{XOR}$, so both calculations agree.
If the earlier $\andgadget$ outputs agree, applying the same $\mathsf{NOT}$ and $\mathsf{XOR}$ operations gives
\[
a_v(\bz,p^*)=a_v^*,\qquad b_v(\bz,p^*)=b_v^*.
\]
The claimed output $p^*(\tau_v)=a_v^*b_v^*$ is also the actual output, completing the induction.
The definitions of $f$ and $g$ now give
\[
p^*\equiv f_{(\bz,p^*)}g_{(\bz,p^*)}.
\]
The final gate outputs $1-\indicate(\bz)$, so
\[
p^*(\tau_d)=1-\indicate(\bz)=0.
\]
Both requirements hold, and hence $\cfunc(\bz,p^*)=1$.

\medskip
\textbf{Part 1), uniqueness.}
Suppose $\cfunc(\bz,p)=1$.
For every $v\in[d]$, the polynomial identity implies
\[
p(\tau_v)
=f_{(\bz,p)}(\tau_v)g_{(\bz,p)}(\tau_v)
=a_v(\bz,p)b_v(\bz,p).
\]
Again, the inputs to the first $\andgadget$ agree with their actual values.
If the earlier $\andgadget$ outputs agree with their actual values, the intervening $\mathsf{NOT}$ and $\mathsf{XOR}$ operations give $a_v(\bz,p)=a_v^*$ and $b_v(\bz,p)=b_v^*$.
The displayed equality then gives $p(\tau_v)=a_v^*b_v^*$, so the output also agrees.
Induction on $v$ therefore proves these equalities for all $v\in[d]$.
Consequently,
\[
\begin{aligned}
p
\equiv f_{(\bz,p)}g_{(\bz,p)}
\equiv
\left(\sum_{v=1}^d\delta_v a_v^*\right)
\left(\sum_{v=1}^d\delta_v b_v^*\right)
\equiv p^*.
\end{aligned}
\]
Thus all coefficients of $p$ are fixed, and the certificate is unique.

\medskip
\textbf{Part 2).}
Suppose $\indicate(\bz)=0$ and $\cfunc(\bz,p)=1$.
The induction in the uniqueness argument uses only $p\equiv f_{(\bz,p)}g_{(\bz,p)}$, so it still shows that every claimed $\andgadget$ output equals its actual output.
In particular,
\[
p(\tau_d)=1-\indicate(\bz)=1.
\]
This contradicts the  requirement $p(\tau_d)=0$.
Therefore no certificate has $\cfunc(\bz,\bc)=1$.
\end{proof}

\end{document}